\newcommand{\R}{\mathbb R}
\newcommand{\pr}{\mathbf P}
\newcommand{\E}{\mathbb E}
\newcommand{\N}{\mathbb N}
\newcommand{\Z}{\mathbb Z}
\newcommand{\ind}{\mathbb{I}}
\begin{document}
\title{Detecting anomalies in fibre systems using 3-dimensional image data\thanks{This research was supported by the German Ministry of Education and Research (BMBF) through the project ``AniS'', the DFG Research Training Group GRK 1932, DFG grant SP 971/10-1, as well as the DAAD scientific exchange program ``Strategic Partnerships''.}
}

\author{Denis Dresvyanskiy \and Tatiana  Karaseva \and Vitalii Makogin \and Sergei Mitrofanov \and Claudia Redenbach \and Evgeny Spodarev
}

\institute{D. Dresvyanskiy \and   T.  Karaseva \and S. Mitrofanov \at
              Reshetnev Siberian State University of Science and Technology 
31, Krasnoyarsky Rabochy~Av., Krasnoyarsk, 660037, Russian Federation\\
              \email{DresvyanskiyDenis@yandex.ru, tatyanakarasewa@yandex.ru, sergeimitrofanov95@gmail.com} \and          
V. Makogin \and   E. Spodarev  \at
              Institut f\"{u}r Stochastik, Universit\"{a}t Ulm,
D-89069 Ulm \\
              \email{vitalii.makogin@uni-ulm.de}, \email{evgeny.spodarev@uni-ulm.de}   \and 
           C. Redenbach \at
Technische Universit\"{a}t Kaiserslautern,
Fachbereich Mathematik, Postfach 3049,\\ 67653 Kaiserslautern\\
\email{redenbach@mathematik.uni-kl.de} 
}

\maketitle

\begin{abstract}
We consider the problem of detecting anomalies in the directional distribution of  fibre materials observed in 3D images. We divide the image into a set of scanning windows and classify them into two clusters: homogeneous material and anomaly. Based on a sample of estimated local fibre directions, for each scanning window we compute several classification attributes, namely the coordinate wise means of local fibre directions, the entropy of the directional distribution, and a combination of them. 
We also propose a new spatial modification of the Stochastic Approximation Expectation-Maximization (SAEM) algorithm. Besides the clustering we also consider testing the significance of anomalies. To this end, we apply a change point technique for random fields and derive the exact inequalities for tail probabilities of a test statistics. The proposed methodology is first validated on simulated images. Finally, it is applied to a 3D image of a fibre reinforced polymer.
\keywords{Anomaly detection\and classification \and fibre composite \and directional distribution \and change-point problem \and entropy \and SAEM algorithm}

\end{abstract}

\section{Introduction}

{Fibre composites, e.g., fibre reinforced polymers or high performance concrete, are an important class of functional materials. Physical properties of a fibre composite such as elasticity or crack propagation are influenced by its microstructure characteristics including the fibre volume fraction, the size or the direction distribution of the fibres. Therefore, an understanding of the relations between the fibre geometry and macroscopic properties is crucial for the optimisation of materials for certain applications. During the last years, micro computed tomography (μCT) has proven to be a powerful tool for the analysis of the three-dimensional microstructure of materials.}

In the compression moulding process of glass fibre reinforced polymers, the fibres order themselves inside the raw material as a result of mechanical pressure. During this process, deviations from the requested direction may occur, creating undesirable fibre clusters and/or deformations. These inhomogeneities are characterized by abrupt changes in the direction of the fibres, and their detection is studied in this paper. 

The problem of detecting change points in random sequences, (multivariate) time series, panel and regression data has a long history, see the books \cite{BrassNik93,BrodDarkh00,Carlstein_ed94,ChenGupta12,TimeS1,Wu05}. Changes to be detected may concern the mean, variance, correlation, spectral density, etc. of the (stationary) sequence $\{X_k,k\geq 0\}$.  This kind of change detection has been considered by various authors starting with \cite{Page}. Sen and Srivastava \cite{Sen} considered tests for a change in the mean of a Gaussian model. An overview can also be found in \cite{Brodsky}. The CUSUM procedure, Bayesian approaches as well as maximum likelihood estimation are often used. Scan statistics come also into play naturally, see e.g. \cite{BrodDarkh99,BrodDarkh00}.

First approaches to change point analysis for random fields (or measures) have been developed in the papers \cite{Buc14,BucHeus15,CaoWors99,Chamb02,HanubiaMnatsakanov96,PiterJar11,Kaplan90,Kaplan92,Lai08,MuellSong94,Ninomiya04,Sharietal16,SiegYakir08,SiegWors95}, see also the review in \cite[Section 2, D]{BrodDarkh99} and \cite[Chapter 6]{BrodDarkh00}. The involved methods include M-estimation, minimax methods for risks, the geometric tube method, some nonparametric and Bayesian techniques.  However, much is still to be done in this relatively new area of research.

{In this paper, we develop a change-point test for $m-$dependent random fields. In the spirit of the book \cite{BrodDarkh00}, it uses inequalities for tail probabilities of suitable test statistics. It is applied to the mean and the entropy of the local directional distribution of fibres observed in a 3D image of a fibre composite obtained by micro computed tomography. Characteristics are estimated in a moving scanning window that runs over the observed material sample, cf. \cite{ruiz2016entropy,Alonzo}. Our main task is to detect areas with anomalous spreading of the fibres. Even though we focus on anomalies in fibres' directions, our method will work with any local characteristic of fibres with values in a (compact) Riemannian manifold such as fibre length or mean curvature.}

If an anomaly is present, its location is detected using a new spatial modification of the Stochastic Approximation Expectation Maximization (SAEM) algorithm (see \cite{EM} for a review of Expectation Maximization (EM) algorithms for the separation of components in a mixture of Gaussian distributions as well as a recent paper \cite{Laurent}). It allows for spatial clustering of the whole fibre material into a ``normal'' and an ``anomaly'' zones.

The paper is organized as follows. In Section 2,  we introduce the stochastic model of a fibre process. In Section 3, we describe the procedure of generating the sample data, introduce the mean of local directions as well as their entropy. There, we compare two methods for entropy estimation: plug-in and nearest neighbor statistics.  In Section 4, we consider the detection  of anomalies as a change-point problem for the corresponding $m-$dependent random fields. In Section 5, we localize the anomalous region of fibres solving a clustering problem for multivariate random fields. For this purpose, we propose a new spatial modification of SAEM algorithm, which decreases the diffuseness of clusters. In Section 6, we apply our methods to 3D images of simulated (Section 6.1) and real (Section 6.2) fibre materials and compare their performance.

\section{Problem setting}
In this section, we give some basic definitions and results for fibre processes. For more details, see, for example, the book \cite{stoyan}. In 3-dimensional Euclidean space, a fibre $\gamma$ is a simple curve  $\{\gamma(t)$ $=(\gamma_1(t),\gamma_2(t),\gamma_3(t)),$ $t\in [0,1]\}$ of finite length satisfying the following assumptions: 
\begin{itemize}
    \item $\{\gamma(t),t\in [0,1]\}$ is a $C^1$-smooth function.
    \item $\|\gamma'(t)\|_3^2>0$ for all $t\in[0,1],$ where $\|\gamma'(t)\|_3^2=|\gamma_1'(t)|^2+|\gamma_2'(t)|^2+|\gamma_3'(t)|^2.$
    \item A fibre does not intersect itself.
\end{itemize}

The collection of fibres forms a {\it fibre system} $\phi$ if it is a union of at most countably many fibres $\gamma^{(i)},$ such that any compact set is intersected by only a finite number of fibres, and $\gamma^{(i)}((0,1))\cap\gamma^{(j)}((0,1))=\varnothing,$ if $i\neq j,$ i.e., the distinct fibres may have only end-points in common. The length measure corresponding to the fibre system $\phi$ (and denoted by the same symbol) is defined by
$$\phi(B)=\sum_{\gamma^{(i)}\in \phi}h(\gamma^{(i)}\cap B) $$
 for bounded Borel sets  $B\in \mathcal{B}(\R^3),$ where $h(\gamma \cap B)=\int_0^1 \mathbb{I} \{\gamma(t)\in B\}\sqrt{|\gamma'(t)|^2}dt$ is the length of fibre $\gamma$ in window $B.$ Then $\phi(B)$ is the total length of fibre pieces in the window $B.$

\begin{definition}
A {\it fibre process} $\Phi$ is a random element with values in the set $\mathbb{D}$ of all fibre systems $\phi$ with $\sigma$-algebra $\mathcal{D}$ generated by sets of the form $\{\phi\in \mathbb{D}: \phi(B)<x\}$ for all bounded Borel sets $B$ and real numbers $x$.
The distribution $P$ of a fibre process is a probability measure on $[\mathbb{D},\mathcal{D}].$ The fibre process $\Phi$ is said to be {\it stationary} if it has the same distribution as the translated fibre process $\Phi_x=\Phi+x$ for all $x\in \R^3.$
\end{definition}

For classification needs we consider an abstract fibre characteristic $w$. Let $(E,\mathcal{E},\sigma)$ be a measurable space where $E$ is a (compact) Riemannian manifold equipped with a metric $\rho.$ Let $w(x)\in E$ be some characteristic of a fibre at point $x\in \R^3,$ assuming that exactly one fibre of $\Phi$ passes through $x.$
Then a weighted random measure $\Psi$ can be defined by
$$\Psi(B\times L)=\int_B \mathbb{I}\{w(x)\in L\}\Phi(dx)$$ for bounded $B\in\mathcal{B}(\R^3)$ and $L\in \mathcal{E}.$ Thus, $\Psi(B\times L)$ is the total length of all fibre pieces of $\Phi$  in $B$ such that their characteristic $w$ lies in range $L.$ 

As classifying characteristics $w$ we can for instance choose the fibres' local direction (with $E$ being the sphere $\mathbb{S}^2$), their length or curvature (both with $E=\mathbb{R}_+$). In this article we focus on local directions of fibres, but the results can easily be applied to other choices of $w.$ 

If the fibre process $\Phi$ is stationary then the {\it intensity measure} of $\Psi$  can be written as 
$\E \Psi(B\times L) =\lambda |B| f (L),$ where $\lambda>0$ is called the {\it intensity} of $\Psi,$ $|\cdot|$ is the Lebesgue measure in $\R^3$ and $f$ is a probability measure on $\mathbb{S}^2$ which is called the {\it directional distribution} of fibres. The distribution $f$ is the fibre direction distribution in the typical fibre point, hence length-weighted.  
In what follows, $|A|$ is either the cardinality of a finite set $A$ or the Lebesgue measure of $A$, if $A$ is uncountable and measurable.

Let $ \oplus$ and $ \ominus$ be the dilation (erosion, resp.) operation on images as introduced e.g. in \cite{stoyan}. Assume that we observe a dilated version $\Xi=\Phi \oplus B_r$ of $\Phi$ within a window $W=[a_1,b_1]\times[a_2,b_2]\times[a_3,b_3],$ $a_i<b_i, i=1,2,3,$ where $B_r$ is the ball of radius $r>0$ centered at the origin.  In our setting, we assume that the fibres' length is significantly larger than their diameter $2r.$ Moreover, we assume that there is $\varepsilon>0$ such that $\Xi$ is morphologically closed w.r.t. $B_\varepsilon,$ i.e.,  
$(\Xi\oplus B_\varepsilon)\ominus B_\varepsilon = \Xi.$ This condition ensures that the local fibre direction is uniquely defined in each point within $\Xi.$

We would like to test the hypothesis 
\begin{itemize}\label{hypothesis}
    \item[] $H_0:$ $\Phi$ is stationary with intensity $\lambda$ and directional distribution $f$ vs.
    \item[] $H_1:$ There exists a compact set $A \subset W$ with $|A|>0$ and $|W\setminus A|>0$ such that 
    \begin{align*}
        &\frac{1}{\lambda |A|  }\E \int_{A} \mathbb{I}\{w(x)\in \cdot\}\Phi(dx)\\
        &\neq \frac{1}{\lambda |W \setminus A| }\E \int_{W\setminus A} \mathbb{I}\{w(x)\in \cdot\}\Phi(dx).
    \end{align*}
\end{itemize}

If $H_1$ holds true, the region $A$ is called an {\it anomaly region.} In the following, we discuss how to test the hypothesis $H_0$ and how to detect the anomaly region $A$.

\begin{figure}[h]
\center{\includegraphics[width=0.45\linewidth,keepaspectratio]{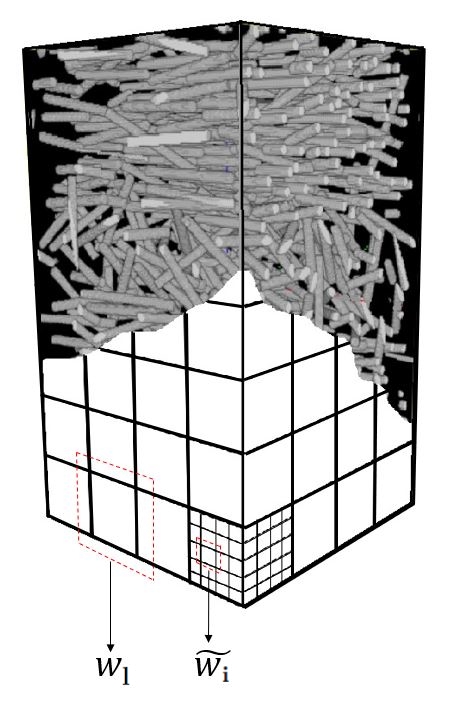}}
\caption{Construction of scanning windows $W_l.$}
	\label{fig:Picture1l}	
\end{figure}
\begin{figure}[h]
\center{\includegraphics[width=0.45\linewidth,keepaspectratio]{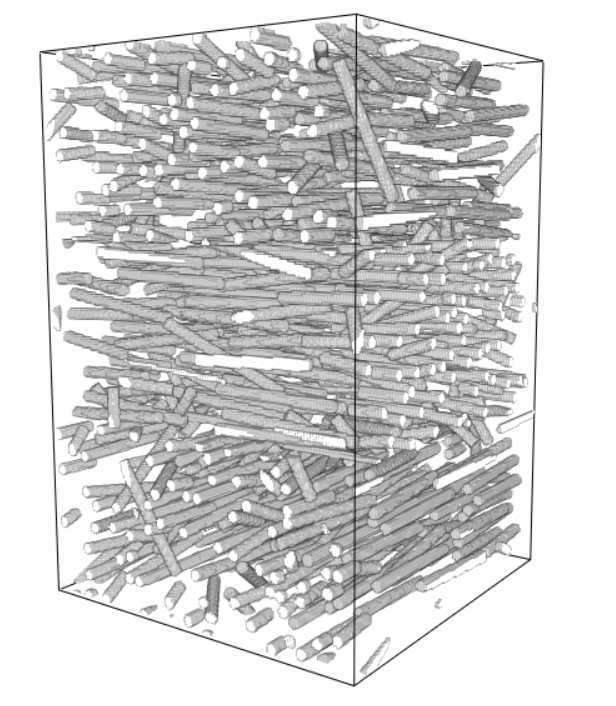}}
	\caption{Visualization of simulated material with anomaly zone.}
	\label{fig:Picture1r}
\end{figure}

\section{Data and clustering criteria}
We assume that the dilated fibre system $\Xi \cap W$ is observed as a 3D greyscale image. Several methods for estimating the local fibre direction $w(x)$ in each fibre pixel $x\in \Xi$ are discussed in \cite{ImageAnalStereol1489}. We use the approach based on the Hessian matrix that is implemented in the 3D image analysis software tool MAVI \cite{mavi}. The smoothing parameter $\sigma$ required by the method is chosen as $\sigma= \hat{r}$, where $\hat{r}$ is an estimate of the (constant) thickness of the typical fibre. In the simulated samples, it is known. For the real data, it is obtained manually from the images.

We divide the observation window $W$ into small cubes $\widetilde{W}_{\vec{i}}$ (see Fig.~\ref{fig:Picture1l}) of the same size, whose edge length $\Delta$ equals three times the fibre diameter. The principal axis $\hat{w}_{\vec{i}}$ of local directions (e.g. \cite{ImageAnalStereol1489}) in each $\widetilde{W}_{\vec{i}}$, here referred to as ``average local direction'', is then computed using the function \emph{SubfieldFibreDirections} in MAVI.

Let $J_T=\{\mathbf{i}=(i_1,i_2,i_3), i_1=\overline{1,n_1},i_2=\overline{1,n_2},i_3=\overline{1,n_3}\}$ be the regular grid of cubes $\widetilde{W}_{\vec{i}}.$ 
Some of the cubes $\widetilde{W}_{\vec{i}}$ may not contain enough fibre voxels to obtain a reliable estimate of the local fibre direction $\hat{w}_{\vec{i}}$. Let $J\subset J_T$ be the subset of indices of cubes which allow for such an estimation.
For each point $\vec{i}\in J$ denote by 
$X_{\vec{i}}=(x_{\vec{i}},y_{\vec{i}},z_{\vec{i}})^T$ the average local direction estimated from $\widetilde{W}_\mathbf{i}.$  We assume that our fibres are non-oriented and 
{can be then transformed such that}  $X_{\vec{i}}\in \mathbb{S}^2_+, $ where $\mathbb{S}^2_+ =\{(x,y,z)\in \R^3: x^2+y^2+z^2=1,z\in[0,1]\}$ is a hemisphere. The size of this sample is $N=|J|\leq  n_1 n_2 n_3.$

Our main task is to determine the anomaly regions or, in other words, to classify the set of points $J$ into two clusters corresponding to the \lq\lq{}homogeneous material\rq\rq{} and the \lq\lq{}anomaly\rq \rq{} (one of these clusters can be empty).
To do so, we combine  $M^3$ of the small  cubes $\widetilde{W}_{\vec{i}}$ (having edge length $\Delta$) to a larger cube $W_{\vec{l}}$, such that the 3D image $W$ is divided into larger non-empty cubic observation windows (see Figure \ref{fig:Picture1l}). The larger cubes have side length $M\Delta$ and the corresponding grid of the larger cubes is denoted by $J_W=\{\vec{l}=(l_1,l_2,l_3):l_1=\overline{1,m_1}, l_2=\overline{1,m_2}, l_3=\overline{1,m_3}\}$.
The set of indexes of non-empty cubes $\widetilde{W}_{\vec{i}}$ within $W_{\vec{l}}$ is denoted by $S_{\vec{l}}=\{(i_1,i_2,i_3)\in J,\widetilde{W}_{(i_1,i_2,i_3)}\subset W_{\vec{l}}\}$.
For each window $W_{\vec{l}},\vec{l}\in J_W$, we estimate the entropy and the mean of the local directions, based on the estimates $\{\hat{w}_{\vec{i}},\vec{i}\in S_{\vec{l}} \}$ as described below.

\subsection{Mean of local fibres direction}
The vector $M_{\vec{l}}=(M_{x,\vec{l}},M_{y,\vec{l}},M_{z,\vec{l}})^T$ is calculated for each window $W_{\vec{l}},\vec{l}\in J_W$ as the coordinate-wise sample mean of local directions (MLD):
\begin{equation*}\begin{gathered}M_{x,\vec{l}}=\frac{1}{|S_{\vec{l}}|}\sum_{\vec{i}\in S_{\vec{l}}} \hat{w}_{i_1},\quad M_{y,\vec{l}}=\frac{1}{|S_{\vec{l}}|}\sum_{\vec{i}\in S_{\vec{l}}} \hat{w}_{i_2},\\
~ M_{z,l}=\frac{1}{|S_{\vec{l}}|}\sum_{\vec{i}\in S_{\vec{l}}} \hat{w}_{i_3} .
\end{gathered}
\end{equation*}
Note that $|S_{\vec{l}}|\leq M^3 $ and normally $|S_{\vec{l}}|\approx M^3.$

\subsection{The entropy of the directional distribution}

The entropy of a random variable is a certain measure of the diversity/concentration of its range. Let $\mathbf{P}$ be a probability distribution of a random element $X$ on an abstract measurable phase-space $(\mathbb{X},\sigma).$ The value
\begin{equation}
\label{entropy1}
    E_X = - \int_{\Omega} \ln( \mathbf{P}(X(\omega))) \mathbf{P}(d\omega)
\end{equation}
is called the {\it Shannon (differential) entropy} of $X$.  If  $\mathbf{P}$ is absolutely continuous with respect to some  measure $\sigma$ then there exists the Radon-Nikodym derivative (or density) $f=\frac{d \mathbf{P}}{d \sigma},$ and the entropy of $X$ has the following form
\begin{equation}
\label{entropy2}
    E_f:= E_X = - \int_{\mathbb{X}} \ln( f(x)) f(x)\sigma(d x).
\end{equation}
In what follows, we assume that the random variable $X$ is absolutely continuous with density $f:\mathbb{X}\to \R_+.$ In our problem setting, $X$ corresponds to the local fibre direction, $\mathbb{X}$ is the sphere $\mathbb{S}^2$ and $\sigma$ is the spherical surface area measure on $\mathbb{S}^2.$  Since our fibres are non-oriented ($X\in \mathbb{S}^2_+$), we may consider even local direction densities $f$ on the whole sphere $\mathbb{S}^2$ where appropriate, instead of a density $f$ defined on $\mathbb{S}^2_+.$
However, choosing another classifying characteristic $w$ will lead to a different measurable space $(\mathbb{X},\sigma).$  

\subsection{Entropy estimation}

In the literature, there is a large number of papers devoted to non-parametric entropy estimation for i.i.d. random vectors in $\R^D,$ see e.g. the review in \cite{BDGM} and references in \cite{BD}. We will dwell upon two important estimates: the plug-in and the nearest neighbor ones. 

\subsubsection {Plug-in estimator of entropy}
For simplicity, define the plug-in estimator for directional distributions on the sphere $\mathbb{S}^2$ with even densities $f:$ $f(x)=f(-x),x\in \mathbb{S}^2.$ Its general definition on compact manifolds $\mathbb{X}$ can be found e.g. in \cite{Alonzo}. 

For a directional distribution density $f,$ take the kernel estimator $\widehat{f}_B(\cdot)$
 on a window $B\subseteq J_T$ of the form

\begin{equation}
\label{kernel}
  \widehat{f}_{B}(y) =\frac{1}{|B|}\sum \limits_{\mathbf{i} \in B\cap J }\frac{|\sin \rho(y,X_\mathbf{i})|}{h^2\rho(y,X_\mathbf{i})} K\left(\frac{\rho(y,X_\mathbf{i})}{h}\right),
\end{equation}
where $h > 0$ denotes the bandwidth, $K : \R_+ \to \R$ is a kernel function and $\rho:\mathbb{S}^2\times \mathbb{S}^2 \to \R_+$ is a geodesic metric  given by $\rho(\mathbf{x},\mathbf{y})=\arccos \langle \mathbf{x},\mathbf{y} \rangle, \mathbf{x},\mathbf{y}\in \mathbb{S}^2,$ where  $\langle \cdot,\cdot \rangle$ is the Euclidean scalar product in $\R^3$.

Then the {\it plug-in} estimator of $E_{f}$ in the window $W_\mathbf{l}$ is given by
\begin{equation}
\label{plugin}
 \widehat{E}_{f, \mathbf{l}}=-\frac{1}{|S_\mathbf{l}|} \sum_{\mathbf{i}\in S_\mathbf{l}}\ln \widehat{f}_{B+\mathbf{i}}(X_\mathbf{i}),
\end{equation}
 where $B \subset J_T$ is the sub-window and $B+\mathbf{i}:=\{\mathbf{j}+\mathbf{i},\mathbf{j}\in B\}$ denotes the translation of $B.$ 

For homogeneous marked Poisson point processes, the plug-in estimator $\widehat{E}_{f}$ as above is considered in \cite{Alonzo}. See also \cite{ruiz2016entropy} for the context of Boolean models of line segments. We also made an attempt to apply this method to our 3D image data. But we met  difficulties which basically come from the relatively small amount of data available. 
Namely, $\widehat{E}_{f}$ needs a large number of points in sub-windows $B'$ during the estimation of $f$ together with a large number of such sub-windows. Let us illustrate these difficulties on a simple example.

\begin{example}
Consider the uniform distribution on the sphere $\mathbb{S}^2,$ i.e., the density is $f(x)=\frac{1}{4\pi},x\in \mathbb{S}^2.$ We generate a sample from this distribution and estimate its entropy $E_f$ using the plug-in estimator \eqref{plugin} with $|B|=|\mathbb{S}^2|^{1/9}$ (as in \cite{ruiz2016entropy}). The results are presented in Table \ref{tabular:table1}.
 Moreover, we run the procedure 100 times and compare the obtained values with the exact value of the entropy 
\[ E_f = 
-\int\limits_{\mathbb{S}^2}\dfrac{1}{4\pi} \ln \dfrac{1}{4\pi} \sigma(dx)=\ln 4\pi \approx 2.53. \]

Obviously, the bias of $\widehat{E}_{f}$ is too large with less than 62000 entries, which is in accordance with \cite{M} stating the impracticability of plug-in entropy estimates for samples in higher dimensions.  There are 430741 entries in $J_T$ for the real data (Figure \ref{fig:Picture22l}) and 463537 entries in RSA fibre data (Figure \ref{fig:Picture5l}), that allows us to subdivide the images only into 4 non-intersecting regions with more than 100000 cubes $W_\mathbf{l}.$ In other words, in order to test the hypotheses $H_0$ vs. $H_1$ with test statistics based on estimated entropy \eqref{plugin} we have a sample of size 4, which is too small, compare Section 4, inequality \eqref{msize}. There, for $m=1$ the minimal sample size $|W|$ must be 1000.

\end{example}
\begin{table}
	\caption{The plug-in entropy estimator for the uniform directional distribution on $\mathbb{S}^2$}
	\label{tabular:table1}
		\begin{tabular}{rrcc}
			\hline\noalign{\smallskip}
			\textbf{Sample size}	& \textbf{Mean}	& \textbf{Variance}& \textbf{MSE} \\
			\noalign{\smallskip}\hline\noalign{\smallskip}
			500000 & 2.476  & $2.848\times10^{-6}$ & 0.003\\
			375000	& 2.456	& $5.034\times 10^{-6}$ & 0.006\\
			250000	& 2.418	& $6.293\times10^{-6}$ & 0.013\\
			125000	& 2.309	& $7.245\times10^{-6}$ & 0.049\\
			62500	& 2.099	& $2.739\times10^{-5}$ & 0.187\\
			12500	& 0.981	& $8.917\times10^{-5}$ & 2.403\\
			6250	& 0.359	& $1.251\times10^{-4}$ & 4.718\\
			1250	& 0.008	& $4.695\times10^{-4}$  & 6.36\\
			\noalign{\smallskip}\hline
		\end{tabular}
\end{table}

\subsubsection {Nearest neighbor estimator of entropy}
In order to overcome the above difficulties we apply another estimator of $E_f$ introduced in the paper by Kozachenko and Leonenko \cite{Leon}. We call this estimator ``Dobrushin estimator'' because its main idea is due to Dobrushin \cite{Dobr}. The estimator from \cite{Leon} cannot be applied directly, because it is designed for random vectors in a $d$-dimensional Euclidean space which is flat. In our setting, the phase space  $(\mathbb{S}^2,\sigma)$ is a manifold of positive constant curvature with geodesic metric $\rho.$ Therefore, we take a version of Dobrushin estimator for the case of an $d-$dimensional compact Riemannian manifold $\mathbb{X}$ with geodesic metric $\rho$ and Hausdorff measure $\sigma$. 

For defining this estimator, the following results will be useful.
Denote by $B_\delta(x)$ the ball in $(\mathbb{X},\rho)$ with radius $\delta>0$ and center $x,$ i.e., $B_\delta(x)=\{y\in \mathbb{X}: \rho(x,y)\leq \delta\}.$  Since a $\rho-$ball and a Euclidean $d$-dimensional ball are bi-Lipschitz equivalent, $d$ coincides with the Hausdorff dimension of the manifold $\mathbb{X},$ see \cite[Corollary 2.4]{Fal}.  Furthermore, for $\sigma-$almost all points $x\in \mathbb{X}$ the Lebesgue density theorem is true, i.e., \begin{equation}
\label{sigma:a}
\sigma(B_\delta(x))\sim c \delta^d,
\, \text{ as }\delta\to0+, 
\end{equation}
where $d=\dim_H X$ is the Hausdorff dimension of $\mathbb{X}$ and $c=2^d D_X>0,$ where $D_X$ is the Hausdorff density of $\mathbb{X}$, see \cite[Proposition 4.1,5.1]{Fal} and \cite[Theorem 30]{rogers}.

\begin{definition}
Let $(\xi_1,\ldots,\xi_N)$ be a sample of i.i.d. $\mathbb{X}-$ valued random elements with continuous density function $f:\mathbb{X}\to \R_+.$ Denote by $\rho_i$ the distance to the nearest neighbor of $\xi_i, i=\overline{1,N},$ i.e., $\rho_i=\min_{j=\overline{1,N},j\neq i} \{\rho(\xi_i,\xi_j)\}.$ Define the statistic
\begin{equation}
\label{dobr_est}
\widehat{E}=\frac{d}{N}\sum_{i=1}^N\ln \rho_i+\ln (c(N-1))+ \gamma,
\end{equation}
where $c$ and $d$ are defined by \eqref{sigma:a} and $\gamma=-\int_{0}^{\infty}(\ln y)e^{-y}dy\approx 0.5772$ is Euler's constant. The statistic $\widehat{E}$ is called {\it nearest neighbor (Dobrushin) estimator} of the entropy. 
\end{definition} 
It coincides with the nearest-neighbour entropy estimate given in \cite[p. 2169]{Penrose} with the only difference that in \cite{Penrose} Euclidean distances between $\xi_i$ are used instead of geodesic distances $\rho_i.$ The $L_2$-consistency of $\widehat{E}$ is proven in \cite[Theorem 2.4]{Penrose} for i.i.d. samples $(\xi_1,\ldots,\xi_N)$ as above with bounded density $f$ of compact support.

In fact, a large class of parametric distributions on a sphere, including the Fisher, the Watson or the Angular Central Gaussian distribution, has bounded densities with compact support.
\begin{remark}
In many problems of probability theory, limit theorems for independent observations remain true for weakly dependent data. Since the fibre materials are weakly dependent (the fibers have a finite length), we can assume that the entropy and mean local directions are weakly dependent as well. The proof of consistency of \eqref{dobr_est} for weakly dependent $\xi_i$ is non-trivial and goes beyond the scope of this paper.
\end{remark}
\begin{remark}
Our data sets consist of straight fibres which are longer than the edge length $\Delta$ of small observation windows $\widetilde{W}.$ Such fibres yield several almost equal values of average local directions $X_\mathbf{i}.$ This leads to very small values of a distance to the nearest neighbor $\rho_i$ and, consequently, to the large negative bias of $\hat{E}$ which is computed using $\log \rho_i.$ Trying to eliminate this bias, we propose to use the following version of \eqref{dobr_est} 
\begin{equation}\begin{gathered}
    \label{dobr_est_v}
    \widehat{E}=\frac{d}{\sum_{i=1}^N \ind\{\rho_i>\rho_0\}}\sum_{i=1}^N \ind\{\rho_i>\rho_0\}\ln \rho_i+\ln c \\
    +\ln \left(\sum_{i=1}^N \ind\{\rho_i>\rho_0\}-1\right)+ \gamma
    \end{gathered}
\end{equation}
with  penalty value $\rho_0=0.01$ found by computational tuning.
\end{remark}

In order to test the accuracy of the Dobrushin estimator, we have generated 100 samples from the uniform directional distribution on $\mathbb{S}^2$. We have computed the Dobrushin statistic and compared it with the exact entropy value $\ln (4\pi)\approx 2.53.$ The results are presented in Table \ref{tabular:table01}. 
\begin{table}
	\caption{The Dobrushin estimator \eqref{dobr_est} for the uniform directional distribution on the sphere.}
	\label{tabular:table01}
	\begin{center}
		\begin{tabular}{rrccc}
		\hline\noalign{\smallskip}
			\textbf{Sample size}	& \textbf{Mean}	&  \textbf{Variance} &\textbf{MSE}\\
			\noalign{\smallskip}\hline\noalign{\smallskip}
			125 & 2.51 & 0.02 &  0.02\\
			64	& 2.50	& 0.03 &   0.02\\
			\noalign{\smallskip}\hline
		\end{tabular}
	\end{center}
\end{table}

Based on these results, we conclude that the Dobrushin estimator \eqref{dobr_est} is quite accurate for small sample sizes. Even for a sample with 64 entries the entropy is estimated much better than by the plug-in method.

\section{Change point detection in random fields}
To test the hypothesis $H_0$ against $H_1,$ we check the existence of anomaly regions in a realization of an $m-$dependent geometric random field $\{s_k,k\in W\}.$ 
Here we follow the ideas from \cite{Brodsky}, where change-point problems for mixing random fields on general parametric (disorder) regions were considered. The field 
$\{s_k,k\in W\}$ will be chosen in a way such that the hypothesis $H_0$ implies that it is stationary, whereas $H_1$ means the presence of a region $I_\theta\subset W$ with different mean value of $s_k.$
Later in our application to fibre materials in Section 4.3, we assume the anomaly region to be a box \cite{BucchiaWendler17}.

\subsection{Random fields with inhomogeneities in mean}
Let $\{\tilde{\xi}_k,k\in \Z^3\}$ be an integrable stationary real-valued random field with $\mu=\E \tilde{\xi}_k.$ Denote by $\xi_k$ the  centered field $\xi_k=\tilde{\xi}_k-\mu, k=(k_1,k_2,k_3)\in \Z^3.$ 
Moreover, we assume that $\{\xi_k,k\in \Z^3\}$ is {\it $m-$dependent}, i.e, $\xi_k$ and $\xi_l$ are independent if $\max_{i=1,2,3}|k_i-l_i|>m, k=(k_1,k_2,k_3)\in \Z^3, l=(l_1,l_2,l_3)\in \Z^3.$
Let $\Theta$ be a finite parameter space. For every $\theta\in \Theta$ we define subsets of anomalies $I_\theta\subset \Z^3$  completely determined by a parameter $\theta.$
Then for some $\theta_0\in \Theta$ we observe \begin{equation}
\label{xk:def}s_k=\tilde{\xi}_k+h \ind\{k\in I_{\theta_0}\},~k\in W,
\end{equation}
where $W=[1,M_1]\times [1,M_2] \times [1,M_3] \cap \Z^3,$ and $h\in \R.$ Assume that $I_\theta\subset W$ for every $\theta\in \Theta.$ Denote $I_\theta^c=W\setminus I_\theta.$

Let $\Theta_0$ correspond to the values of $\theta$ for anomalies which we consider as significant, i.e., they are neither extremely small nor represent the majority of the data.
Formally, for $\gamma_0,\gamma_1\in (0,1),$ $\gamma_0<\gamma_1,$ we let 
\begin{equation*}
    \Theta_0=\{\theta\in \Theta: \gamma_0|W|\leq |I_\theta|\leq\gamma_1|W|\}.
\end{equation*}
Then $\Theta_1= \Theta \setminus \Theta_0$ corresponds to extremely small or large anomalies, i.e., 
\begin{equation*}
    \Theta_1=\{\theta\in \Theta: |I_\theta|<\gamma_0|W|, \text{ or }|I_\theta^c|<(1-\gamma_1)|W|\}.
\end{equation*}

\subsection{Testing the change of expectation}
Now we can formulate the change-point hypotheses for the random field $\{s_k,k\in W\}$ with respect to its expectation as follows.
\begin{itemize}
     \item[] $H_0':$  $\E s_k=\mu$ for every $k\in W$ (i.e.  $h=0$) vs.
     \item[] $H_1':$  There exists $\theta_0\in \Theta_0$ such that $\E s_k=\mu+h, k\in I_{\theta_0}, ~ h\neq 0,$ and $\E s_k=\mu,k\in I_{\theta^c_0}.$
 \end{itemize}
Consider the following change-in-mean statistics for the sample $S=\{s_k,k\in W\}:$
\begin{align}
    &Z(\theta)=\frac{1}{|I_\theta|}\sum_{k\in I_\theta} s_k-\frac{1}{|I^c_\theta |}\sum_{k\in I^c_\theta} s_k\\
 \nonumber   &=\frac{1}{|I_\theta |}\sum_{k\in I_\theta} (\xi_k + \mu +h \ind\{k\in I_{\theta_0}\})\\
 \nonumber &-\frac{1}{|I^c_{\theta_0}|}\sum_{k\in I^c_\theta } (\xi_k + \mu +h \ind\{k\in I_{\theta_0}\})\\
    &=\frac{1}{|I_\theta |}\sum_{k\in I_\theta } \xi_k -\frac{1}{|I^c_\theta|}\sum_{k\in I^c_\theta } \xi_k +h\left(\frac{|I_\theta \cap I_{\theta_0}|}{|I_\theta |}-\frac{|I^c_\theta \cap I_{\theta_0}|}{|I^c_\theta|}\right).
\end{align}
Denote by 
\begin{equation*}
    \begin{gathered}\eta(\theta):=Z(\theta)-\E Z(\theta)\\
    = \frac{1}{|I_\theta|}\sum_{k\in I_\theta } \xi_k -\frac{1}{|I^c_\theta|}\sum_{k\in I^c_\theta } \xi_k, ~\theta \in \Theta
    \end{gathered}
\end{equation*}
the centered field $Z(\theta).$
In order to test $H'_0$ vs. $H'_1$ we use the test statistics 
\begin{equation}
\label{teststat}
\begin{gathered}
T_W(S)=\max_{\theta\in \Theta_0}|Z(\theta)|\\
=\max_{\theta\in \Theta_0}\left|\frac{1}{|I_\theta|}\sum_{k\in I_\theta} s_k-\frac{1}{|I^c_\theta |}\sum_{k\in I^c_\theta} s_k\right|.
\end{gathered}
\end{equation}
We reject $H_0'$ if $T_W(S)$ exceeds the critical value $y_\alpha.$ Let us find such $y_\alpha>0$ via the probability of the 1st-type error $\pr_{H_0'}(\max_{\theta\in \Theta_0}|Z(\theta)|\geq y_\alpha)\leq \alpha.$ 
It holds
\begin{align*}
    &\pr_{H_0'}(\max_{\theta\in \Theta_0}|Z(\theta)|\geq y_\alpha )\\
    &=\pr_{H_0'}\left(\max_{\theta\in \Theta_0}\left|\eta(\theta)+h\left(\frac{|I_\theta \cap I_{\theta_0}|}{|I_\theta |}-\frac{|I^c_\theta \cap I_{\theta_0}|}{|I^c_\theta|}\right)\right|\geq y_\alpha\right)\\
    &=\pr(|\max_{\theta\in \Theta_0}|\eta(\theta)|\geq y_\alpha).
\end{align*}
Thus, we find the bounds for tail probabilities of the  random variable $\max_{\theta\in \Theta_0}|\eta(\theta)|.$ To do so, we use the ideas from \cite{Heinrich} to get the following bounds for $m-$dependent random fields. For the sake of generality, our results are formulated in $\mathbb{Z}^D, D\in \N.$
\begin{theorem}
\label{thm2}
 Let $\{\xi_k,k\in \Z^D\}$ be a stationary real-valued $m-$dependent centered random field and $\{b_k,k\in \Z^D\}$ be real numbers. Assume that there exist $H,\sigma>0$ such that 
 \begin{equation}
 \label{thm:c}       \E |\xi_k^p| \leq \frac{p!}{2}H^{p-2} \sigma^2,~ p=2,3,\ldots
 \end{equation}
Then for any $W\subset \Z^D,$ $|W|<\infty$ we have
\begin{align*}
    &\pr\left(\left|\sum_{k\in W} \xi_k b_k\right|\geq y\right)\leq 
    2\exp\left(- \frac{y^2}{4 m^D \sigma^2 \|b\|_2^2}\right)\\
    &\times \ind\left\{0<y\leq \frac{\sigma^2\|b\|_2^2}{H \|b\|_{\infty}}\right\}\\
   & +
    2 \exp\left(- \frac{y}{2 H m^D \|b\|_{\infty}}  + \frac{\sigma^2\|b\|^2_{2}}{4 H^2 m^D  \|b\|^2_{\infty}} \right)\\
    &\times\ind\left\{y> \frac{\sigma^2\|b\|_2^2}{H \|b\|_{\infty}}\right\},
\end{align*}
where $\|b\|_{\infty}=\max_{k\in W}|b_k|$ and
$\|b\|_2^2=\sum_{k\in W}b^2_k.$
\end{theorem}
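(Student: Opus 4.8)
The plan is to run a Bernstein-type Chernoff argument, the only genuine complication being the $m$-dependence, which I would remove by a decorrelation-by-colouring device in the spirit of \cite{Heinrich}. Write $S=\sum_{k\in W}\xi_k b_k$. First I would reduce the two-sided estimate to a one-sided one: since $\E|(-\xi_k)^p|=\E|\xi_k^p|$ for all $p$, the reflected field $\{-\xi_k\}$ satisfies the same moment condition \eqref{thm:c}, so $\pr(|S|\ge y)\le \pr(S\ge y)+\pr(-S\ge y)$, and it suffices to bound $\pr(S\ge y)\le e^{-ty}\,\E e^{tS}$ for $t>0$; the factor $2$ in the statement then comes from the two tails.

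The heart of the argument is the estimate of the moment generating function $\E e^{tS}$ for dependent summands. I would partition $W$ into $N=m^D$ classes $G_1,\dots,G_N$ according to the residues of the coordinates modulo the dependence range, so that within each $G_j$ any two distinct indices are separated by more than $m$ in the $\ell^\infty$-distance and hence $\{\xi_k,\ k\in G_j\}$ is a family of \emph{independent} variables. Applying the generalized Hölder inequality with all exponents equal to $N$ gives
\[
\E e^{tS}=\E\prod_{j=1}^N e^{tS_j}\le \prod_{j=1}^N\bigl(\E e^{NtS_j}\bigr)^{1/N},\qquad S_j:=\sum_{k\in G_j}\xi_k b_k .
\]
For each $j$ independence factorizes $\E e^{NtS_j}=\prod_{k\in G_j}\E e^{Ntb_k\xi_k}$, and a standard consequence of \eqref{thm:c} is the single-variable estimate $\E e^{s\xi_k}\le \exp\bigl(\tfrac{\sigma^2 s^2}{2(1-|s|H)}\bigr)$ for $|s|H<1$, obtained by expanding the exponential, using $\E\xi_k=0$ and summing the resulting geometric series. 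Inserting $s=Ntb_k$, bounding $|b_k|\le\|b\|_\infty$ in the denominator and recombining the products over $k$ and then over $j$ yields
\[
\E e^{tS}\le \exp\!\left(\frac{m^D\sigma^2\|b\|_2^2\,t^2}{2\,(1-m^D H\|b\|_\infty\,t)}\right),\qquad 0<t<\frac{1}{m^D H\|b\|_\infty}.
\]

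Finally I would optimize the Chernoff exponent. To recover exactly the stated constants, rather than the exact Bernstein minimization I would restrict to $0<t\le \tfrac{1}{2m^D H\|b\|_\infty}$, where $1-m^DH\|b\|_\infty t\ge \tfrac12$ and hence $\E e^{tS}\le \exp(At^2)$ with $A:=m^D\sigma^2\|b\|_2^2$. The unconstrained minimizer of $-ty+At^2$ is $t^\ast=y/(2A)$, and its admissibility $t^\ast\le \tfrac{1}{2m^D H\|b\|_\infty}$ is equivalent to $y\le \sigma^2\|b\|_2^2/(H\|b\|_\infty)$, which is precisely the threshold separating the two indicator regimes. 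In the small-$y$ regime one takes $t=t^\ast$ and gets $\exp(-y^2/(4A))$; in the large-$y$ regime one takes the boundary value $t=\tfrac{1}{2m^DH\|b\|_\infty}$ and gets $\exp\bigl(-\tfrac{y}{2m^DH\|b\|_\infty}+\tfrac{\sigma^2\|b\|_2^2}{4m^DH^2\|b\|_\infty^2}\bigr)$, the two branches of the claimed bound.

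I expect the main obstacle to be the decorrelation step: making the partition into independent classes precise and correctly accounting for the combinatorial factor $m^D$. In particular one must verify that the number of residue classes forced by the convention $\|k-l\|_\infty>m$ matches the exponent appearing in the statement, and that the Hölder step does not waste constants. The remaining ingredients, namely the single-variable moment-generating-function estimate and the two-regime Chernoff optimization, are routine, although the clean matching of constants relies on the truncation $1-m^DH\|b\|_\infty t\ge \tfrac12$ rather than on the exact optimization.
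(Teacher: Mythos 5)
Your proposal follows essentially the same route as the paper's own proof: the same reduction to two one-sided Chernoff bounds, the same partition of $W$ into $m^D$ residue classes combined with H\"older's inequality to factorize the moment generating function, the same Bernstein-type single-variable estimate with the truncation $1-m^DH\|b\|_\infty t\ge \tfrac12$, and the same two-regime optimization with the identical threshold $y=\sigma^2\|b\|_2^2/(H\|b\|_\infty)$. The one caveat you flag --- whether residues modulo $m$ (rather than $m+1$) really yield independence under the convention $\max_i|k_i-l_i|>m$ --- applies equally to the paper's argument, which uses exactly the same classes $W(l)$, $l\in\{1,\dots,m\}^D$.
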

\begin{proof}
Using the Markov inequality we have for any $u>0$ that
\begin{align}
\nonumber& \pr\left(\left|\sum_{k\in W} \xi_k b_k\right|\geq y\right)=\pr\left(\sum_{k\in W} \xi_k b_k\geq y\right)\\
\nonumber&+\pr\left(\sum_{k\in W} \xi_k (-b_k)\geq y\right) \leq  \frac{\E \exp( u\sum_{k\in W} \xi_k b_k)}{\exp(uy)}\\
\label{Heq1}   &+\frac{\E \exp( u\sum_{k\in W} \xi_k (-b_k))}{\exp(uy)}.
\end{align}
Denote by $W(l)=\{l+m i\in W| i\in \Z^D\}$ for $l\in \{1,\ldots,m\}^D.$ It follows from H\"{o}lder's inequality and $m-$dependence that
\begin{align}
\nonumber   &\E \exp\left( u\sum_{k\in W} \xi_k b_k\right)\\
\nonumber   &= \E \left(\prod_{l\in\{1,\ldots,m\}^D}\exp\left( u\sum_{k\in W(l)} \xi_k b_k\right)\right)\\
\nonumber   &\leq \prod_{l\in\{1,\ldots,m\}^D} \left(\E\exp\left( m^D u\sum_{k\in W(l)} \xi_k b_k\right)\right)^{1/m^D}\\
\nonumber&=\prod_{l\in\{1,\ldots,m\}^D} \left(\prod_{k\in W(l)}\E e^{m^D u \xi_k b_k}\right)^{1/m^D}\\
\label{Heq2}   &=\prod_{k\in W} \left(\E e^{m^D u  b_k \xi_k}\right)^{1/m^D}.
\end{align}
From Taylor's expansion we have for $u \in \left[0,\frac{1}{2 H m^D  |b_k|)^{-1}}\right]$
\begin{align}
\nonumber  &\E e^{m^D u  b_k \xi_k} =1+ m^D u  b_k \E \xi_k + \frac{1}{2} m^{2d} u^2  b^2_k\E \xi_k^2\\
\nonumber&+ m^{2d} u^2 b^2_k \sum_{p\geq 3}\frac{1}{p!} m^{d (p-2)} u^{p-2}  b^{p-2}_k\E \xi_k^p\\
\nonumber  &  \leq 1+  \frac{1}{2} m^{2d} u^2  b^2_k\sigma^2 + \frac{1}{2}m^{2d} u^2 b^2_k \sigma^2 \sum_{p\geq 3}(H m^D u |b_k|)^{p-2}\\
\nonumber&=1+ \frac{1}{2}\frac{m^{2d} u^2  b^2_k\sigma^2}{1-H m^D u |b_k|}\\
\label{Heq3}  &\leq 1+ m^{2d} u^2  b^2_k\sigma^2 \leq \exp(m^{2d} u^2  b^2_k\sigma^2).
\end{align}
Combining  \eqref{Heq2} and \eqref{Heq3} we continue for the first term in \eqref{Heq1} with the following bound for
$0 \leq u \leq $ \linebreak $ (2 H m^D  \max_{k\in W}|b_k|)^{-1}:$
\begin{equation}
\label{Heq4}
\begin{gathered}
      e^{-uy}\prod_{k\in W} \left(\E e^{m^D u  b_k \xi_k}\right)^{1/m^D} \\
      \leq  
    \exp\left(- uy +m^D u^2 \sigma^2\sum_{k\in W}b^2_k \right).
\end{gathered}
\end{equation}
The minimum of \eqref{Heq4} is achieved for $u=y/(2 m^D \sigma^2 \|b\|_2^2).$ Moreover,  bound \eqref{Heq4} is valid for the second term in \eqref{Heq1}, too. Therefore, for $0\leq y\leq \frac{\sigma^2\|b\|_2^2}{H \|b\|_{\infty}}$ we have
$$\pr\left(\left|\sum_{k\in W} \xi_k b_k\right|\geq y\right)\leq 2 \exp\left(- \frac{y^2}{4 m^D \sigma^2 \|b\|_2^2}\right). $$
For $y> \frac{\sigma^2\|b\|_2^2}{H \|b\|_{\infty}}$ we put $u=(2 H m^D \|b\|_{\infty})^{-1}$ in \eqref{Heq4} and obtain 
\begin{align*}
    &\pr\left(\left|\sum_{k\in W} \xi_k b_k\right|\geq y\right)\\
    &\leq 2 \exp\left(- \frac{y}{2 H m^D \|b\|_{\infty}}  + \frac{1}{4 H^2 m^{D}  \|b\|^2_{\infty}} \sigma^2\|b\|^2_{2} \right).
\end{align*}
This completes the proof.
\end{proof}
We apply Theorem \ref{thm2} to  $\eta(\theta),\theta\in \Theta_0,D=3.$ First, we rewrite $\eta(\theta)$ as 
\begin{equation*}
    \begin{gathered}
    \eta(\theta)= \frac{1}{|I_\theta|}\sum_{k\in W} \xi_k \ind\{k\in I_\theta\} -\frac{1}{|I^c_\theta|}\sum_{k\in W} \xi_k \ind\{k\in I_\theta^c\}\\
    =\sum_{k\in W} \xi_k b_k(\theta), \theta \in \Theta_0,
    \end{gathered}
\end{equation*}
where
\begin{equation*}
b_k(\theta)=  \frac{\ind\{k\in I_\theta\}}{|I_\theta|} -\frac{\ind\{k\in I_\theta^c\}}{|I^c_\theta|}.
\end{equation*}
\begin{corollary}
\label{cor2}
Let $|I_\theta|\leq |I_\theta^c |$ for $\theta\in \Theta_0$ and $|\xi_k|\leq M_0,k\in W$ a.s., then under the conditions of Theorem \ref{thm2} we have that
\begin{align*}
    &\pr\left(\left|\eta(\theta)\right|\geq y\right)\\
    &\leq 
    2\exp\left(- \frac{y^2}{4 m^3 \sigma^2 }\frac{|I^c_\theta||I_\theta|}{|W|}\right)\ind\left\{0<y\leq \frac{\sigma^2|W|}{M_0 |I_\theta^c|}\right\}\\
   & +
    2\exp\left(- \frac{y}{2 M_0 m^3}|I_\theta|+\frac{\sigma^2 |W|}{4 M_0^2 m^3 |I_\theta^c|}|I_\theta|\right)\\
    &\times\ind\left\{y> \frac{\sigma^2|W|}{M_0 |I_\theta^c  |}\right\}.
\end{align*}
\end{corollary}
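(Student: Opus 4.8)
The plan is to obtain Corollary \ref{cor2} as a direct specialization of Theorem \ref{thm2} to the coefficients $b_k(\theta)$ and the bounded field, so that the work reduces to verifying the moment hypothesis and evaluating two norms. First I would check that the almost sure bound $|\xi_k|\leq M_0$ yields the Bernstein-type moment condition \eqref{thm:c}. Since the field is centered and stationary, set $\sigma^2:=\E\xi_k^2$. Then for every $p\geq 2$,
\[
\E|\xi_k|^p\leq M_0^{p-2}\,\E\xi_k^2=M_0^{p-2}\sigma^2\leq \frac{p!}{2}\,M_0^{p-2}\sigma^2,
\]
because $p!/2\geq 1$ for all $p\geq 2$. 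Hence \eqref{thm:c} holds with $H=M_0$ and this choice of $\sigma^2$, and I may invoke Theorem \ref{thm2} with $D=3$ and $H=M_0$.

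Next I would compute the two norms of $b(\theta)=\{b_k(\theta)\}_{k\in W}$ entering the theorem. By definition $b_k(\theta)$ equals $1/|I_\theta|$ on $I_\theta$ and $-1/|I_\theta^c|$ on $I_\theta^c$, so
\[
\|b\|_2^2=|I_\theta|\cdot\frac{1}{|I_\theta|^2}+|I_\theta^c|\cdot\frac{1}{|I_\theta^c|^2}=\frac{1}{|I_\theta|}+\frac{1}{|I_\theta^c|}=\frac{|W|}{|I_\theta|\,|I_\theta^c|},
\]
using $|W|=|I_\theta|+|I_\theta^c|$. For the supremum norm, the assumption $|I_\theta|\leq|I_\theta^c|$ gives $1/|I_\theta|\geq 1/|I_\theta^c|$, whence $\|b\|_\infty=1/|I_\theta|$.

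Finally I would substitute $H=M_0$, $\|b\|_\infty=1/|I_\theta|$ and $\|b\|_2^2=|W|/(|I_\theta|\,|I_\theta^c|)$ into the two-regime bound of Theorem \ref{thm2} and simplify. The regime threshold becomes
\[
\frac{\sigma^2\|b\|_2^2}{H\|b\|_\infty}=\frac{\sigma^2\,|W|/(|I_\theta|\,|I_\theta^c|)}{M_0/|I_\theta|}=\frac{\sigma^2|W|}{M_0\,|I_\theta^c|},
\]
which matches the indicators in the statement; the quadratic exponent turns into $-\tfrac{y^2}{4m^3\sigma^2}\tfrac{|I_\theta|\,|I_\theta^c|}{|W|}$, while the linear-regime exponent becomes $-\tfrac{y}{2M_0 m^3}|I_\theta|+\tfrac{\sigma^2|W|}{4M_0^2 m^3\,|I_\theta^c|}|I_\theta|$, which is exactly the claimed inequality.

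There is no genuine obstacle here: the $m$-dependence and the exponential tail behaviour have already been absorbed once and for all into Theorem \ref{thm2}. The only points requiring care are the identification of $H$ with $M_0$ via the boundedness assumption and the bookkeeping that cancels the factors $|I_\theta|$ when the norms are inserted; everything else is routine algebraic simplification.
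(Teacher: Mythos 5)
Your proposal is correct and follows essentially the same route as the paper: compute $\|b(\theta)\|_2^2=|W|/(|I_\theta||I_\theta^c|)$ and $\|b(\theta)\|_\infty=1/|I_\theta|$, verify the moment condition \eqref{thm:c} with $H=M_0$ from the almost sure bound, and substitute into Theorem \ref{thm2} with $D=3$. Your explicit check that $p!/2\geq 1$ and your verification of the regime threshold are slightly more detailed than the paper's proof, but there is no substantive difference.
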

\begin{proof}
From the definition of $b_k(\theta)$ we have
\begin{align*}
&\|b(\theta)\|^2_{2}=\sum_{k\in W}\left(\frac{\ind\{k\in I_\theta \}}{|I_\theta |} -\frac{\ind\{k\in I_\theta^c \}}{|I^c_\theta|}\right)^2\\
&=\frac{1}{|I_\theta|}+\frac{1}{|I^c_\theta|}=\frac{|W|}{|I^c_\theta||I_\theta|},
\end{align*}
and
\begin{align*}
    \|b(\theta)\|_{\infty}&=\max_{k\in W}|b_k(\theta)|=\max\left(\frac{1}{|I_\theta|},\frac{1}{|I^c_\theta |}\right)=\frac{1}{|I_\theta |}.
\end{align*}
Since $|\xi_k|\leq M_0$ then $\E |\xi_k^p| \leq \E (\xi^2_k |\xi_k|^{p-2}) \leq M_0^{p-2} \E \xi^2_k$ $\leq M_0^{p-2} \sigma^2.$
Therefore,  we put $H=M_0$ in \eqref{thm:c} and obtain
the statement of the corollary.
\end{proof}
Since $\pr\left(\max_{\theta\in \Theta_0}|\eta(\theta)|\geq y\right)\leq \sum_{\theta\in \Theta_0}\pr\left(|\eta(\theta)|\geq y\right),$ 
we have the following corollary.
\begin{corollary}
\label{cor3}
Let $0<\gamma_0<\gamma_1<1/2$ and assume that the conditions of Theorem \ref{thm2} and Corollary \ref{cor2} are satisfied. Then 
\begin{align}
\nonumber    &\pr\left(\max_{\theta\in \Theta_0}|\eta(\theta)|\geq y\right)\\
\label{cor3eq1}&\leq \sum_{\theta\in \Theta_0: |I_\theta^c|\leq \frac{\sigma^2 |W|}{y M_0}}2e^{- \frac{y^2}{4 m^3 \sigma^2 }\frac{|I^c_\theta||I_\theta|}{|W|}}\\
\label{cor3eq2}    &+\sum_{\theta\in \Theta_0: |I_\theta^c|> \frac{\sigma^2 |W|}{y M_0}}2e^{- \frac{y}{2 M_0 m^3}|I_\theta|+\frac{\sigma^2 |W|}{4 M_0^2 m^3 |I_\theta^c|}|I_\theta|}.
\end{align}
\end{corollary}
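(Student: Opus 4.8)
The plan is to combine the union bound with the per-$\theta$ tail estimate already established in Corollary~\ref{cor2}. Since $\Theta_0$ is finite, the first step is simply subadditivity of probability, which is exactly the inequality $\pr(\max_{\theta\in\Theta_0}|\eta(\theta)|\geq y)\leq \sum_{\theta\in\Theta_0}\pr(|\eta(\theta)|\geq y)$ displayed just above the statement. It then suffices to bound each summand $\pr(|\eta(\theta)|\geq y)$ individually.

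Before invoking Corollary~\ref{cor2}, I would first check that its hypothesis $|I_\theta|\leq|I_\theta^c|$ is met for every $\theta\in\Theta_0$. This is where the assumption $\gamma_1<1/2$ enters: by the definition of $\Theta_0$ we have $|I_\theta|\leq\gamma_1|W|<\tfrac12|W|$, and since $|I_\theta^c|=|W|-|I_\theta|>\tfrac12|W|$, the required inequality $|I_\theta|<|I_\theta^c|$ holds automatically. With this verified, Corollary~\ref{cor2} applies verbatim to each $\theta\in\Theta_0$.

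Next I would substitute the Corollary~\ref{cor2} bound into each term of the union bound. That bound is a sum of two exponential terms, each switched on by an indicator: the sub-Gaussian term is active when $0<y\leq \sigma^2|W|/(M_0|I_\theta^c|)$ and the Bernstein-type term when $y> \sigma^2|W|/(M_0|I_\theta^c|)$. For fixed $y$ these two conditions are complementary in $\theta$, so rewriting the thresholds as $|I_\theta^c|\leq \sigma^2|W|/(yM_0)$ and $|I_\theta^c|> \sigma^2|W|/(yM_0)$ respectively partitions $\Theta_0$ into two groups. Regrouping the resulting sum over $\theta$ along this partition produces exactly the two sums \eqref{cor3eq1} and \eqref{cor3eq2} in the statement.

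The proof is therefore essentially bookkeeping once Corollary~\ref{cor2} is available; the only genuine point to watch is the verification of $|I_\theta|\leq|I_\theta^c|$, which I expect to be the sole place where the hypothesis $\gamma_1<1/2$ is actually used. It is the ``obstacle'' only in the sense that omitting it would invalidate the application of Corollary~\ref{cor2}; the remaining regrouping of the indicators into the two index sets is purely mechanical.
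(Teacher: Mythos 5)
Your proposal is correct and matches the paper's (implicit) argument exactly: the paper derives Corollary~\ref{cor3} by the union bound stated immediately before it, followed by substituting the two-regime tail estimate of Corollary~\ref{cor2} for each $\theta$ and regrouping according to the indicator thresholds. Your additional observation that $\gamma_1<1/2$ guarantees $|I_\theta|\leq|I_\theta^c|$ for all $\theta\in\Theta_0$ is a correct and worthwhile clarification of why that hypothesis appears, but it does not change the route.
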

Hence, we reject $H_0'$ if the test statistic $T_W(S)\geq y_\alpha,$ where critical value $y_\alpha$ is the minimum positive number such that 
\begin{align}
 \nonumber   &\sum_{\theta\in \Theta_0: |I_\theta^c|\leq \frac{\sigma^2 |W|}{y_\alpha M_0}}2e^{- \frac{y_\alpha^2}{4 m^3 \sigma^2 }\frac{|I^c_\theta||I_\theta|}{|W|}}\\
 \label{critalv}    &+\sum_{\theta\in \Theta_0: |I_\theta^c|> \frac{\sigma^2 |W|}{y M_0}}2e^{- \frac{y_\alpha}{2 M_0 m^3}|I_\theta|+\frac{\sigma^2 |W|}{4 M_0^2 m^3 |I_\theta^c|}|I_\theta|}\leq \alpha.
\end{align}
\begin{remark}
In the case of Gaussian random field $\{\xi_k,k\in \Z^D\},$ we can obtain the same statements of Corollaries \eqref{cor2} and \eqref{cor3} with $M_0=\sigma.$ Indeed, $\E |\xi_0|^p\leq \sigma^{p-2}\sigma^{2}\E|\zeta|^{p-2},$ $p\geq 2,$ where $\zeta\sim N(0,1).$ So we put $H=M_0$ in \eqref{thm:c}.
\end{remark}
Simplifying the result of Corollary \ref{cor3}, we get that  \eqref{cor3eq1} and \eqref{cor3eq2} is bounded by 
\begin{align*}
    &2\left|\left\{\theta\in \Theta_0: |I_\theta^c|\leq \frac{\sigma^2 |W|}{y M_0}\right\}\right|e^{- \frac{y^2}{4 m^3 \sigma^2 }|W|\gamma_0(1-\gamma_0)}\\
    &+2\left|\left\{\theta\in \Theta_0: |I_\theta^c|> \frac{\sigma^2 |W|}{y M_0}\right\}\right|e^{- \frac{y}{4 M_0 m^3}|W|\gamma_0}.
\end{align*}

Particularly, if $y<\frac{\sigma^2}{M_0(1-\gamma_0)}$ then
\begin{align}
\label{cor2:eq2}    \pr\left(\max_{\theta\in \Theta_0}|\eta(\theta)|\geq y\right)\leq 2\left|\Theta_0\right|e^{- \frac{y^2}{4 m^3 \sigma^2 }|W|\gamma_0(1-\gamma_0)},
\end{align}
and if $y>\frac{\sigma^2}{M_0(1-\gamma_1)}$ then
\begin{align}
\label{cor2:eq3}     \pr\left(\max_{\theta\in \Theta_0}|\eta(\theta)|\geq y\right)\leq 2\left|\Theta_0\right|e^{- \frac{y}{4 M_0 m^3}\gamma_0|W|}.
\end{align}

\subsection{Change-point detection in simulated random fields}
In this section, we study the behaviour of the test statistics $T_W(S)$ given in \eqref{teststat} and probabilities of 1st-type error $\pr_{H_0'}(T_W(S)\geq y_\alpha)$ with respect to different values of $\sigma^2$ and $m.$

The form of $T_W$ allows to test the existence of the anomaly regions of arbitrary form and arbitrary number of connected components. On the other hand, we need to decrease the value $|\Theta|$ up to a feasible quantity for computational reasons (see bounds \eqref{cor2:eq2} and \eqref{cor2:eq3}).
Let $W=[1,M_1]\times[1,M_2]\times [1,M_3]\cap \N^3.$ We fix $\gamma_0=0.05$ and $\gamma_1=0.5,$ as the anomaly should not cover the majority of the window. In this paper we restrict $I_\theta$ to be a single rectangular parallelepiped of the form $[1+\Delta_0 i_1,1+\Delta_0 i_1+\Delta_1 l_1] \times [1+\Delta_0 i_2,1+\Delta_0 i_2+\Delta_1 l_2] \times [1+\Delta_0 i_3,1+\Delta_0 i_3+ \Delta_1 l_3]$. Then the parametric set of significant anomaly regions is given by 
\begin{align}
\nonumber    \Theta_0=&\left\{ \left(1+\Delta_0 i_1,1+\Delta_0 i_2,1+\Delta_0 i_3,\Delta_1 l_1,\Delta_1 l_2,\Delta_1 l_3\right),\right.\\
\nonumber &\text{ where } (i_1,i_2,i_3,l_1,l_2,l_3)\in \N_0^6,\\
\label{parset}    &
    1+\Delta_0i_j+\Delta_1 l_j\leq M_j, l_j\geq L_M, 1\leq j\leq 3,\\
\nonumber    &\left. \gamma_0\leq\frac{|I_{\theta}\cap J|}{M_1 M_2 M_3}\leq\gamma_1\right\}.
\end{align}
The offset parameters $\Delta_0$ and $\Delta_1$ as well as the parameter $L_M$ controlling the minimal edge length of the cuboids have to be chosen by the user.

Assuming the $m-$dependence for our observations, we do not know the exact value of $m.$ Hence, we need to impose some restrictions on the field $\xi$. First, if we know a-priori the maximum length of a typical fiber we can immediately obtain the bound for $m.$ Second, we can estimate the covariance function of the random field $\{s_k,k\in W\}$ and assess $m$ as the range when this empirical covariance is sufficiently close to zero. 
From relation \eqref{cor2:eq3} with  $\frac{\sigma^2}{M_0(1-\gamma_1)}<y<M$ we obtain the following approximate bound for an admissible $m:$
\begin{equation}
\label{mbound}
\begin{gathered}2\left|\Theta_0\right|\exp\left(- \frac{y}{4 M m^3}\gamma_0|W|\right) \leq \alpha \\
\Rightarrow m^3\leq \frac{\gamma_0}{4\log (2|\Theta_0|/\alpha)}|W|.
\end{gathered}
\end{equation}
For example, for $|\Theta_0|=10^4,$ $\alpha=0.05,$ $\gamma_0=0.05,$ one gets 
\begin{equation}
    \label{msize} m\leq\frac{1}{10}|W|^{1/3} \text{ or } |W|\geq 10^3 m^3.
\end{equation}

Let us now compare the empirical probability of the error of the 1-st type  with the bounds \eqref{critalv} for $\pr_{H_0'}(T_W(\cdot)\geq y_\alpha).$ 
We generate 300 realizations of a Gaussian centered $m$-dependent ($m=10$) random field $\{Y_k,k\in W\}$ with {$W=[1,80]^3\cap \N^3$} (which is matched to the considered data sets) and $Y_k\sim N(0,1).$ The dependence is modelled as follows: random variables  
$Y_{1+m k}, k\geq 0$ are independent, and $Y_{1+m k}=Y_{l+m k},k\in \N^3$ for all $l\in\{1,\ldots,m\}^3.$ We take  $\Delta_0=\Delta_1=8,$ $\gamma_0=0.05,$ and $\gamma_1=0.5.$ In this case, $|\Theta_0|=11954.$ Based on the simulated sample of values of the test statistics $T_W(Y)$ we compute the empirical critical value $\hat{y}_\alpha=0.6396$ for $\alpha=0.05.$ From comparison of $\hat{y}_\alpha$ with critical values $y_\alpha$ based on inequality \eqref{critalv} with $M_0=\sigma$ (presented in Table \ref{tb1}), we see that even under the exact value of $\sigma^2=1,$ critical values $y_\alpha$ are quite conservative. For example,   $y_\alpha=0.7198$ for  $m=7$ is still greater than $\hat{y}_\alpha=0.6396$ generated for $m=10.$

Therefore, we can use critical values from inequality \eqref{critalv} with $m$ smaller than its real value.
\begin{table}[h]
    \caption{Critical values $y_{0.05}$ based on inequality \eqref{critalv} for different values of $m$ and $\sigma$.}
    \label{tb1}
    \begin{tabular}{crrrrr}
    \hline\noalign{\smallskip}
        $\sigma^2$&$m=10$&$m=9$&$m=8$&$m=7$&$m=6$\\
\noalign{\smallskip}\hline\noalign{\smallskip}
        1&1.0757&1.0492&0.8793& 0.7198& 0.5711\\
        4&2.1513&2.0985&1.7587 & 1.4394& 1.1423\\
        8&3.0424&2.9677&2.4871 & 2.0357 & 1.6154\\
        \noalign{\smallskip}\hline
    \end{tabular}
    \begin{tabular}{rrrr}
    \hline\noalign{\smallskip}
        $m=5$&$m=4$&$m=3$&$m=2$ \\
\noalign{\smallskip}\hline\noalign{\smallskip}
       0.4345& 0.3109& 0.2019& 0.1099  \\
        0.8690& 0.6218& 0.4039& 0.2198 \\
       1.2289& 0.8793& 0.5711& 0.3109\\
        \noalign{\smallskip}\hline
    \end{tabular}
\end{table}

\section{Cluster based anomaly detection}
For processes $\Xi$ of thick fibres introduced in Section 2, the evidence of an anomaly is tested by applying the test of Section 4 to the random field $\{s_k,k\in W\}$ of estimated local mean or entropy of the chosen fibre characteristic $w.$ In this paper, $w(x)$ is the average direction vector of the fibres of $\Xi$ at $x\in W$ or one of its coordinates  (introduced in Section 3 as $\hat{w}_i$).

Assume that the anomaly test presented in Section 4 rejected the hypothesis $H_0'$ (and hence $H_0$), i.e., we have an evidence of an anomalous fibre behaviour in the rectangular subregion $I_{\theta_0}$ of our image data. Now we are interested in a more accurate estimate of the geometry of this anomal.  The search for an anomaly region in a 3D image can be interpreted as a problem of splitting the volume of the image into two disjoint clusters: {\it homogeneous material} and {\it anomaly}. 

In our problem setting, the volume under investigation, $\bigcup_{\vec{l}\in J_W}W_{\vec{l}}$, is a union of scanning windows $W_{\vec{l}}$ with meaningful local direction information. Each of them yields the clustering attributes mean of local directions (MLD) and entropy. We need to classify all the windows $W_{\vec{l}}$ as either belonging to the homogeneous material or the anomaly. For this purpose, a spatial version of the Stochastic Expectation Maximization algorithm is used.

\subsection{Spatial modification of a Stochastic Approximation  Expectation Maximization (SAEM) algorithm}
\label{SSEM}
 We assume that under the alternative $H_1$ (see page \ref{hypothesis}), fibres in the material may have two different distributions $f_0$ and $f_1$ of local directions. Therefore, the distribution of clustering attributes is a mixture of the distributions $f_0$ and $f_1$.

The Expectation-Maximization algorithm (EM) is commonly used to separate modes in a finite mixture of distributions, cf. \cite{EM} for a review. It is an iterative procedure consisting of two steps: Expectation (Estimation) and Maximization. In general, one assumes that the probability law under study is a mixture of $k$ distributions from the same parametric family. In the first step, the hidden parameters of the sample distribution, i.e., the weights of the mixture components,  are estimated, while in the second step the resulting parameters are updated by maximizing the likelihood function.

Since the EM algorithm belongs to the so-called ``greedy'' algorithms, that is, it converges to the first local optimum that has been found, a modification that compensates this deficiency should be used. One way out is a random ``shaking'' of observations in each iteration. This method is the basis of {\it the Stochastic EM (SEM) algorithm} (cf. \cite{SEM,EM}). 

The SEM algorithm works relatively fast in comparison with other methods, and its results are non-sensitive to an initial approximation. Random perturbations on the parameter space in the S-step guarantee the convergence to the global maximum of the likelihood function 
and help to avoid unstable local maxima.
On the other hand, the outputs of the SEM algorithm form a Markov chain and the final solution is its stationary distribution. To avoid this additional problem we use a modification called 
SAEM (Stochastic Approximation of EM) algorithm which brings together advantages of both EM and SEM approaches, e.g. \cite{Celeux92}.

Assume that the observable distribution has a density of the form 
$$\varphi_{\vec{\delta}}(x)=\beta \varphi(x,\delta_1)+(1-\beta)\varphi(x,\delta_2),~x\in \R^d,$$
where $\varphi(x,\delta_i)$ is a multivariate Gaussian density with unknown parameter $\delta_i=(\mu_i, \Sigma_i),$ $i=1,2$ and $\beta\in [0,1].$ Here $\mu_i$ is the mean and $\Sigma_i$ is the covariance matrix of Gaussian component $i=1,2.$ The combined unknown parameter is $\vec{\delta}=(\beta,\delta_1,\delta_2).$ We call $\varphi(\cdot,\delta_1)$ and $\varphi(\cdot,\delta_2)$ the {\it first} and {\it the second component of the mixture}, respectively. 
For each observation $x_\mathbf{l},\mathbf{l}\in J_W,$ we define a new variable $y_\mathbf{l}=\mathbb{I}\{x_\mathbf{l}$ belongs to the first component$\}$. Therefore, we have two samples: observable $\mathbf{x}=\{x_\mathbf{l},\mathbf{l}\in J_W\}$ and unobservable $\mathbf{y}=\{y_\mathbf{l},\mathbf{l}\in J_W\}.$  Then the log-likelihood function equals 
\begin{align*}
    &\ln L(\vec{\delta},\mathbf{x},\mathbf{y})\\
    &=\sum_{\mathbf{l}\in J_W} \left[y_\mathbf{l}\ln(\beta \varphi(x_\mathbf{l},\delta_1))+(1-y_\mathbf{l})\ln((1-\beta) \varphi(x_\mathbf{l},\delta_2))\right]\\
    &=\nu_1 \ln \beta +\nu_2 \ln (1-\beta)+ \sum_{\mathbf{l}\in J_W:y_\mathbf{l}=1} \ln \varphi(x_j,\delta_1)\\
    &+\sum_{\mathbf{l}\in J_W:y_\mathbf{l}=0} \ln \varphi(x_j,\delta_2),
\end{align*}
where $\nu_1=\sum_{\mathbf{l}\in J_W}y_\mathbf{l}$ denotes the number of observations belonging to the first mixture component and $\nu_2=m_1 m_2 m_3-\nu_1$ observations belong to the second one.

 Assume that we know the a posteriori probability $q_{\mathbf{l}}^{(k-1)}, \mathbf{l}\in J_W,$  that $x_\mathbf{l}$ belongs to the first component $\varphi(\cdot,\delta_1),$ where $k-1$ is the iteration number. Let us describe the EM part.
During the M-Step we obtain new estimates of the parameters $\hat{\delta}^{(k)}_1=(\mu^{(k)}_1,\Sigma^{(k)}_1),$ $\hat{\delta}^{(k)}_2=(\mu^{(k)}_2,$ $\Sigma^{(k)}_2),\hat{\beta}^{(k)}$
by \begin{equation}
\label{EMMstep1}
\begin{gathered}
    \mu^{(k)}_1=\frac{\sum_{\mathbf{l}\in J_W} q^{(k-1)}_\mathbf{l}x_\mathbf{l}}{\sum_{\mathbf{l}\in J_W} q^{(k-1)}_\mathbf{l}},\\
    \Sigma^{(k)}_1=\frac{\sum_{\mathbf{l}\in J_W}q^{(k-1)}_\mathbf{l}(x_\mathbf{l}-\mu^{(k)}_1)(x_\mathbf{l}-\mu^{(k)}_1)^T }{\sum_{\mathbf{l}\in J_W} q^{(k-1)}_\mathbf{l}},
    \end{gathered}
    \end{equation}
\begin{equation}
    \label{EMMstep2}
    \mu^{(k)}_2=\frac{\sum_{\mathbf{l}\in J_W} (1-q^{(k-1)}_\mathbf{l}) x_\mathbf{l}}{\sum_{\mathbf{l}\in J_W}(1- q^{(k-1)}_\mathbf{l})},
\end{equation}
\begin{equation}
    \label{EMMstep3}
    \Sigma^{(k)}_2=\frac{\sum_{\mathbf{l}\in J_W}(1-q^{(k-1)}_\mathbf{l})(x_\mathbf{l}-\mu^{(k)}_2)(x_\mathbf{l}-\mu^{(k)}_2)^T }{\sum_{\mathbf{l}\in J_W} (1-q^{(k-1)}_\mathbf{l})},
\end{equation}
\begin{equation}
\label{EM est:alpha}
\hat{\beta}^{(k)}=\frac{1}{m_1 m_2 m_3}\sum_{\mathbf{l}\in J_W}q^{(k-1)}_\mathbf{l}.
\end{equation}

In the E-step we compute the new probabilities based on \eqref{EMMstep1}-\eqref{EM est:alpha} as
\begin{equation}
\label{qEM}
q_{\mathbf{l}}^{k,EM}=\frac{\hat{\beta}^{(k)} \varphi(x_\mathbf{l},\hat{\delta}^{(k)}_1)}{\hat{\beta}^{(k)} \varphi(x_\mathbf{l},\hat{\delta}^{(k)}_1)+(1-\hat{\beta}^{(k)}) \varphi(x_\mathbf{l},\hat{\delta}^{(k)}_2)},~ \mathbf{l}\in J_W.
\end{equation}

In the SEM-part we act in a different way. In the S-step we generate independent Bernoulli-distributed random variables $y_{\mathbf{l}}^{(k)}\in\{0,1\}$ with probabilities\linebreak $\pr(y_\mathbf{l}^{k}=1)=q_{\mathbf{l}}^{(k-1)},$ $\mathbf{l}\in J_W$. 

During the M-Step we get $\nu_1^{(k)}=\sum_{\mathbf{l}\in J_W} y_{\mathbf{l}}^{(k)}$ and the estimates $\hat{\delta}^{(k)}_1=(\mu^{(k)}_1,\Sigma^{(k)}_1),$  $\hat{\delta}^{(k)}_2=(\mu^{(k)}_2,\Sigma^{(k)}_2),$  $\hat{\beta}^{(k)}$
by \begin{equation}
    \label{SEMMstep1}
    \begin{gathered}
    \mu^{(k)}_1=\frac{\sum_{\mathbf{l}\in J_W: y_\mathbf{l}^{(k)}=1} x_\mathbf{l}}{\nu^{(k)}_1},\\
    \Sigma^{(k)}_1=\frac{\sum_{\mathbf{l}\in J_W: y_\mathbf{l}^{(k)}=1}(x_\mathbf{l}-\mu^{(k)}_1)(x_\mathbf{l}-\mu^{(k)}_1)^T }{\nu^{(k)}_1 },
    \end{gathered}
    \end{equation}
\begin{equation}
    \label{SEMMstep2}
    \begin{gathered}
   \mu^{(k)}_2=\frac{\sum_{\mathbf{l}\in J_W: y_\mathbf{l}^{(k)}=0} x_\mathbf{l}}{\nu^{(k)}_2},\\
    \Sigma^{(k)}_2=\frac{\sum_{\mathbf{l}\in J_W: y_\mathbf{l}^{(k)}=0}(x_\mathbf{l}-\mu^{(k)}_2)(x_\mathbf{l}-\mu^{(k)}_2)^T }{\nu^{(k)}_2 },
    \end{gathered}
\end{equation}
\begin{equation}
\label{SEM est:alpha}
\hat{\beta}^{(k)}=\frac{\nu^{(k)}_1}{m_1 m_2 m_3}=\frac{1}{m_1 m_2 m_3}\sum_{\mathbf{l}\in J_W}{y^{(k)}_\mathbf{l}}.
\end{equation}
In the E-Step, we compute the updated probabilities $q_{\mathbf{l}}^{(k,SEM)}$  based on \eqref{SEMMstep1}-\eqref{SEM est:alpha} by relation \eqref{qEM}.

The essential idea of the SAEM algorithm is to mix $q_{\mathbf{l}}^{(k,EM)}$ and $q_{\mathbf{l}}^{(k,SEM)}$ in iteration step $k$ as 
\begin{equation}
\label{q}
q_{\mathbf{l}}^{(k)}=\lambda_k q_{\mathbf{l}}^{(k,SEM)}+(1-\lambda_k) q_{\mathbf{l}}^{(k,EM)}, \mathbf{l}\in J_W,
\end{equation}
which gives the a posteriori probabilities for the next $(k+1)$th iteration. Here in \eqref{q}, $\{\lambda_k,k\ge 1\}$ is a sequence of positive real numbers $\lambda_k\in (0,1)$ decreasing to zero. We stop the SAEM algorithm after the $k-$th iteration if $\sum_{\mathbf{l}\in J_W}|q_{\mathbf{l}}^{(k-1)}-q_{\mathbf{l}}^{(k)}|\leq \varepsilon,$ where $\varepsilon$ is some threshold.
In the following, the choice of parameters is a result of experimental tuning to our image data yielding good practical results. Particularly, we use $\lambda_k=\frac{50}{50+k^2},k\geq 1$ and $\varepsilon=0.0001$ in our computations.

When the SAEM algorithm stops in the $k_0$-th iteration, we obtain the values 
$\{q^{(k_0)}_\mathbf{l},\mathbf{l}\in J_W\}$ which indicate that $x_\mathbf{l}$ belongs to the first component if $q^{(k_0)}_\mathbf{l}>1/2$
and to the second one, otherwise.

Applying the above SAEM algorithm to our image data yields diffuseness in the resulting clusters (see Figure \ref{fig:Picture13l}).
To avoid this, we propose a smoothing modification ({\it Spatial SAEM}), which takes the spatial location of the sample data into account. Let us describe the new {\it Spatial step}.

{
Let SAEM stop after $k_0$ iterations. For each sample entry $x_\mathbf{l},\mathbf{l}=(l_1,l_2,l_3)\in J_W$
we define the coordinate  $v_\mathbf{l}=\left((l_1-1)M\Delta,(l_2-1)M\Delta,(l_3-1)M\Delta\right),$ that is a vertex of the cube $W_\mathbf{l}.$  
In each further iteration, i.e. for $k > k_0$, Bernoulli random variables $y^{(k)}_\mathbf{l}\in \{0,1\},\mathbf{l}\in J_W$ with success probability $q^{(k_0)}_\mathbf{l},\mathbf{l}\in J_W$ are simulated. Now  $y^{(k)}_\mathbf{l}$ classifies $x_\mathbf{l}$ in such a way that  $y^{(k)}_\mathbf{l}=0$ indicates that $x_\mathbf{l}$ belongs to the first component and the second one, otherwise. 
Then, we compute the number $a^{(k)}_\mathbf{l}$ of neighbors of $x_\mathbf{l}$ belonging to the same cluster as $x_\mathbf{l}$. Neighborhood is defined in terms of the $r-$neighborhood of $v_\mathbf{l}$ such that
\begin{equation}
\label{defa}a^{(k)}_\mathbf{l}=\sum\limits_{\mathbf{i}\in J_W, \mathbf{i}\neq \mathbf{l}}\mathbb{I}(y^{(k)}_\mathbf{i}=y^{(k)}_\mathbf{l},\|v_\mathbf{l}-v_\mathbf{i}\|_{\infty}\leq r),
\end{equation}
$r> 0.$
}

If all $a^{(k)}_\mathbf{l}$ are greater than or equal to a certain threshold $a$ (for our image data, $a=3$ is used) we call the classification $y^{(k)}_\mathbf{l} ,\mathbf{l}\in J_W$,{ \it  admissible} and move to the next iteration. Otherwise, for sample entries $x_\mathbf{l}$ with $a^{(k)}_\mathbf{l}$ being less than $a,$ we change 
$y^{(k)}_\mathbf{l}$, hence, the class of $x_\mathbf{l}$. If the new set $y^{(k)}_\mathbf{l} ,\mathbf{l}\in J_W$, is admissible, then we pass to iteration $k+1.$ If no, we resimulate $y^{(k)}_\mathbf{l} ,\mathbf{l}\in J_W$, until $a^{(k)}_\mathbf{l}\geq a$ for all $\mathbf{l}\in J_W.$

The smoothing procedure stops when $K$ admissible classifications have been generated ($K=1000$ is used). The final a posteriori probabilities 
$q_\mathbf{l}$ in the space of all admissible classifications are computed over the sample of 
$\{y^{(k)}_\mathbf{l} ,\mathbf{l}\in J_W\},k_0\leq k \leq k_0+K$ by 
\begin{equation}
    \label{qf}
    q_\mathbf{l} = \frac{1}{K}\sum_{k=k_0}^{k_0+K} y^{(k)}_\mathbf{l},~ \mathbf{l}\in J_W.
\end{equation}

We also get estimates of components' weights $(\hat{\beta},1-\hat{\beta})$ by \eqref{EM est:alpha}. If $\hat{\beta}\geq  0.5$ we say that the second component corresponds to the ``anomaly''. The observation window $W_\mathbf{l}$ thus belongs to the zone of homogeneous material if $q_\mathbf{l}\geq 0.5$ and to the anomaly zone if $q_\mathbf{l}<0.5.$ If $\hat{\beta}< 0.5$ the roles of the components are swapped.

\section{Application to 3D image data of fibre materials}\label{sec:AnomalyDetection3DImages}

\begin{figure*}[h]
\center{\includegraphics[height=0.45\textheight ]{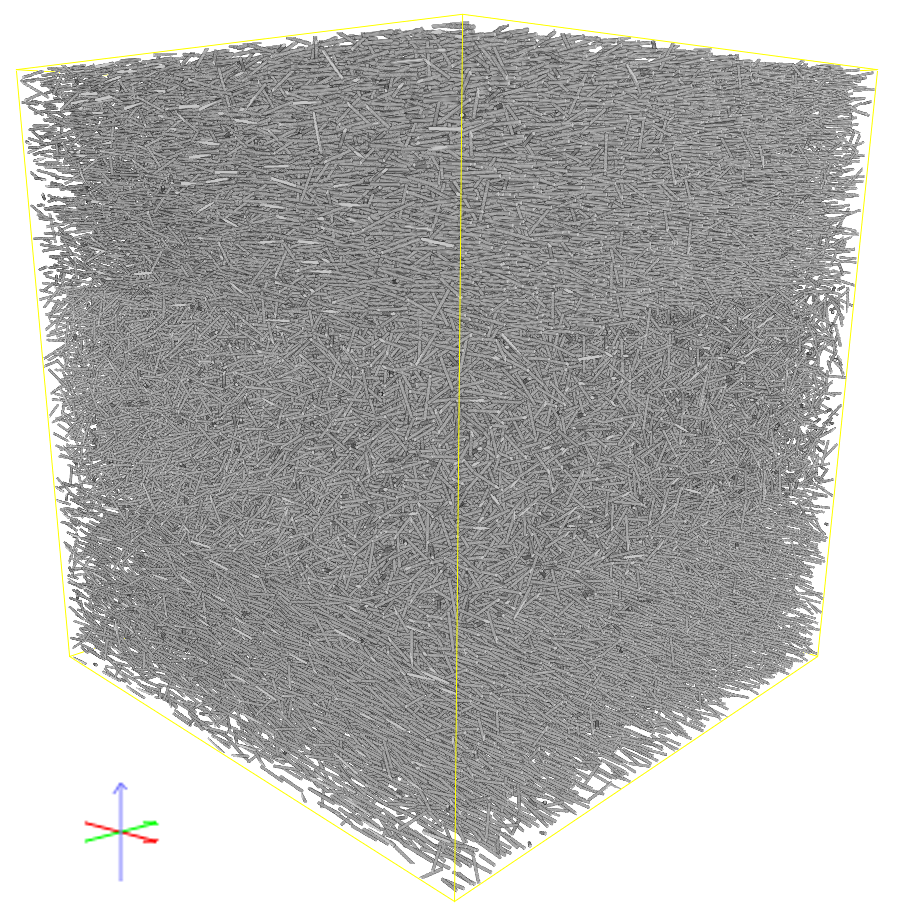}}
		\caption{Visualization of simulated layered RSA fibre data, $2000\times2000\times2100$ voxels}
		\label{fig:Picture5l}
\end{figure*}
\begin{figure*}[h]
\center{\includegraphics[height=0.45\textheight ]{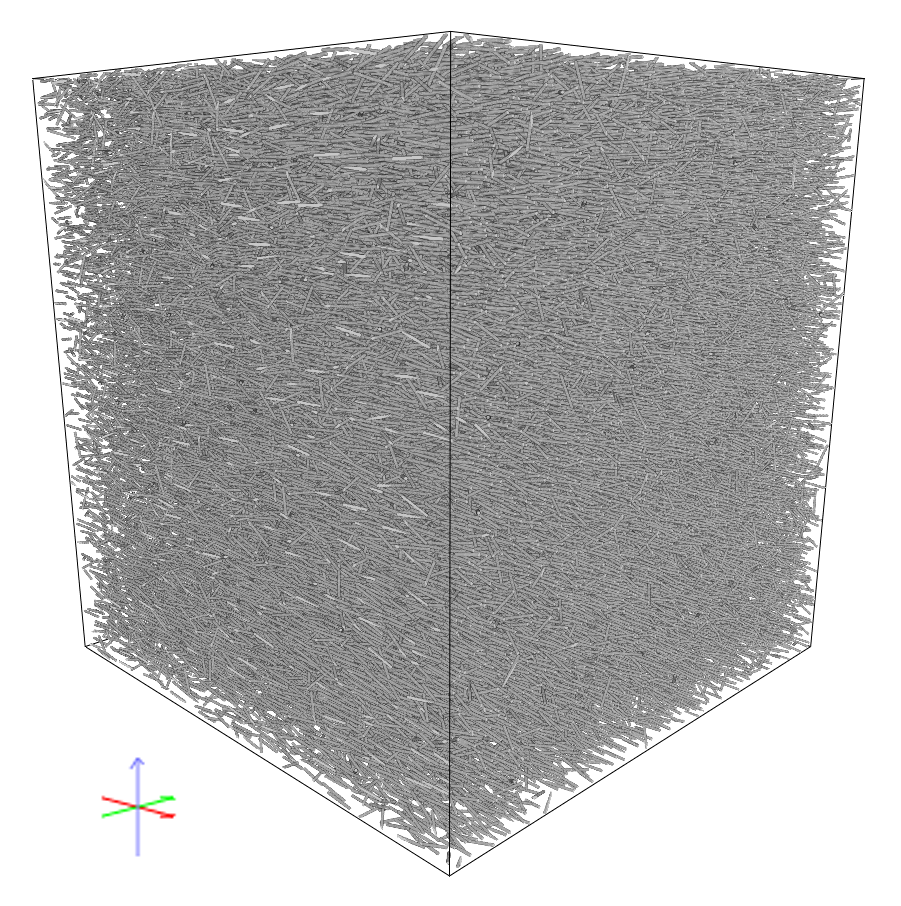}}
		\caption{Visualization of simulated homogeneous RSA fibre data, $2000\times2000\times2100$ voxels}
		\label{fig:Picture5r}
\end{figure*}

\subsection{Simulated data}
First, we illustrate the use of the methods from Sections 4 and 5 on simulated 3D fibre images. We choose a {\it random sequental absorbtion} (RSA) model that randomly adds fibres to the existing material, such that they do not intersect each other, cf. \cite{andra2014geometric,Red11}.  Figure~\ref{fig:Picture5l} shows simulated RSA fibre data in an image of $2000\times2000\times2100$ voxels. The sample exhibits three layers, where fibres differ in their local directional distribution. Each layer has a thickness of 700 voxels and contains 82474 fibres with a constant radius of 4 voxels and length of 100 voxels. Fibre directions are distributed according to a special case of the Angular Central Gaussian distribution described in \cite{Franke}. In the two outer layers, the preferred direction is the $x$-direction and the concentration parameter is $\beta=0.1$ resulting in a high concentration of the fibres along the main direction. In the middle layer, which is considered the anomaly region, the preferred direction is the $y$-direction and the fibres are less concentrated ($\beta=0.5$).

Additionally, we investigate a homogeneous RSA data set where no anomalies should be detected. The data set consists of an image of $2000\times2000\times 2100$ voxels. Here, the concentration parameter of the fibre direction distribution is $\beta=0.1$ in the whole sample. The preferred direction is the $x$-direction.  The fibre radius is 4 voxels, the fibre length is 100 voxels. A visualisation of a realisation of this model is shown in Figure~\ref{fig:Picture5r}.

Now we apply the change-point analysis of Section 4 to random fields of mean local directions and entropy estimates for the homogeneous and layered RSA data.

{To do so,  we transform the data of average local directions $X_\mathbf{k},\mathbf{k}\in J$. In order to avoid cancelling effect of averaging,  we build for each coordinate $x,y,z,$ the samples $\tilde{x}_\mathbf{k},\tilde{y}_\mathbf{k},\tilde{z}_\mathbf{k},$ such that their entries lie in the hemispheres $\mathbb{S}_x^2=\{(x,y,z)\in \mathbb{S},x\geq 0\},\mathbb{S}_y^2=\{(x,y,z)\in \mathbb{S},y\geq 0\},\mathbb{S}_z^2=\{(x,y,z)\in \mathbb{S},z\geq 0\},$ respectively, i.e., $\tilde{x}_\mathbf{k}=|x|_\mathbf{k},$ $\tilde{y}_\mathbf{k}=|y|_\mathbf{k},$ $\tilde{z}_\mathbf{k}=|z|_\mathbf{k},$}

{The sample of the estimated entropy values $\hat{E}_\mathbf{k}$ of directional distribution of fibres $X_\mathbf{i},\mathbf{i}\in J$ in the windows $W_\mathbf{k},\mathbf{k}\in J_W$ is build by estimator \eqref{dobr_est_v} over the transformed directions $\hat{X}_\mathbf{i}\in \mathbb{S}_+^2,\mathbf{i}\in J.$}

We apply the results of Section 4 consequently to the random fields $s_{\mathbf{k}}=$ $\tilde{x}_\mathbf{k},\tilde{y}_\mathbf{k},$ or $\tilde{z}_\mathbf{k},$ and  $\hat{E}_\mathbf{k}.$
Therefore, we have the following 4 pairs of hypotheses of $(H_0',H_1')$-type.
\begin{itemize}
    \item $H_0^x:$  $\E \tilde{x}_\mathbf{k}=\mu_x$ for every $\mathbf{k}\in W$ vs.
    \item $H_1^x:$  $\exists \theta_0\in \Theta_0$ such that $\E \tilde{x}_\mathbf{k}=\mu_x+h_x, \mathbf{k}\in I_{\theta_0}$ and $\E \tilde{x}_\mathbf{k}=\mu_x,\mathbf{k}\in I_{\theta^c_0},$ $h_x\neq0$;
    \item $H_0^y:$  $\E \tilde{y}_\mathbf{k}=\mu_y$ for every $\mathbf{k}\in W$ vs.
    \item $H_1^y:$  $\exists \theta_0\in \Theta_0$ such that $\E \tilde{y}_\mathbf{k}=\mu_y+h_y, \mathbf{k}\in I_{\theta_0}$ and $\E \tilde{y}_\mathbf{k}=\mu_y,\mathbf{k}\in I_{\theta^c_0},$ $h_y\neq0$;
    \item $H_0^z:$  $\E \tilde{z}_\mathbf{k}=\mu_z$ for every $\mathbf{k}\in W$ vs.
    \item $H_1^z:$  $\exists \theta_0\in \Theta_0$ such that $\E \tilde{z}_\mathbf{k}=\mu_z+h_z, \mathbf{k}\in I_{\theta_0}$ and $\E \tilde{x}_\mathbf{k}=\mu_z,\mathbf{k}\in I_{\theta^c_0},$ $h_z\neq0$;
     \item $H_0^E:$  $\E \hat{E}_\mathbf{k}=\mu_E$ for every $\mathbf{k}\in J_W$  vs.
    \item $H_1^E:$  $\exists \theta_0\in \Theta_0$ such that $\E \hat{E}_\mathbf{k}=\mu_E+h_E, \mathbf{k}\in I_{\theta_0}$ and $\E \hat{E}_\mathbf{k}=\mu_E,\mathbf{k}\in I_{\theta^c_0},$ $h_E\neq0.$
\end{itemize}
Since we test only 4 hypotheses simultaneously, we stick to the classical Bonferroni method, e.g. we test each direction and entropy separately with significance level $\frac{1}{4}\alpha.$ 

Before running the algorithms we need to choose the right size of scanning windows. From the initial layered  and homogeneous RSA images with $2000\times2000\times2100$ voxels we obtain $83\times83\times 87$ small windows $\widetilde{W}_{\vec{i}}$ with $24\times24\times24$ voxels each, and 463537 and 460559 nonempty entries, respectively.

Due to the model parameters (fibre length of 100 voxels corresponds to 5 points in $W$), we can assume the random field $\tilde{X}_\mathbf{k}$ to be $m-$dependent { with $m=5$ and $\sigma^2=0.2, M_0=\frac{1}{2}.$ For mean local directions $\tilde{x},\tilde{y},$ and $\tilde{z}$, the parametric set $\Theta_0$ is constructed with $\Delta_0=\Delta_1=8,$ $\gamma_0=0.05,\gamma_1=0.5,$ and $L_M=22$ in \eqref{parset}, $|\Theta_0|=39395.$}

We point out that the samples $(\tilde{x},\tilde{y},\tilde{z})$ and $\hat{E}$ have different sizes due to the construction described in Section 3. Therefore, the parameters $m,\sigma^2$ and the parameter set $\Theta_0$ in \eqref{parset} for $\hat{E}$ differ from the ones for $\tilde{x},\tilde{y},\tilde{z}.$

{For the sample of estimated entropy $\hat{E}_{\mathbf{k}},\mathbf{k}\in J_W,$ we have $m=1, \sigma^2=0.5$ and the parametric set $\Theta_0$ is constructed with $\Delta_0=\Delta_1=2,$ $\gamma_0=0.05,\gamma_1=0.5,$ and $L_M= 4$ in \eqref{parset}, $|\Theta_0|=16536.$} Entropy values approximately have a normal distribution, so we put $M_0=\sigma$ in \eqref{critalv}.

The computed statistics (given in \eqref{teststat}) $T_W(\tilde{x}),T_W(\tilde{y}),$ $T_W(\tilde{z}),$ $T_W(\hat{E})$  and corresponding $p$-values from relation \eqref{critalv} are presented in Table \ref{tb2} for the homogeneous and in Table \ref{tb3} for the layered RSA data. Thus, there is no evidence to reject $H^x_0,$ $H^y_0,$ $H^z_0,$ $H^E_0$ in the homogeneous case. The described test allows to claim that there is an anomaly region in the layered RSA image data, {because we reject $H^x_0,$ $H^y_0,$ and $H^E_0,$ but have no evidence to reject $H^z_0.$}

\begin{table}[h]
    \centering
    \begin{tabular}{cr r r r }
    \hline\noalign{\smallskip}
        Attribute &  Sample var. & Test statistics & $p-$value \\
        \noalign{\smallskip}\hline\noalign{\smallskip}
        $\tilde{x}$& 0.04360 & 0.0344 &  1.00\\
        $\tilde{y}$& 0.03743 & 0.0130 &1.00\\
        $\tilde{z}$& 0.03749 & 0.0146&1.00\\
        $\widetilde{E}$& 0.08984 &0.0942 &1.00\\
        \noalign{\smallskip}\hline
    \end{tabular}
    \caption{Change-point test for mean local directions of homogeneous RSA data.}
    \label{tb2}
\end{table}

\begin{table}[h]
    \centering
    \begin{tabular}{cr r r }
\hline\noalign{\smallskip}
        Attribute &  Sample var. & Test statistics & $p-$value \\
        \noalign{\smallskip}\hline\noalign{\smallskip}
        $\tilde{x}$&0.10592 & 0.44036 & $4.6\times 10^{-30}$\\
        $\tilde{y}$&0.10948 &0.43163 &$2.8\times 10^{-23}$\\
        $\tilde{z}$& 0.06151 &0.18764 &0.301 \\
        $\widetilde{E}$& 0.3583 &1.07030 &0.00\\
        \noalign{\smallskip}\hline
    \end{tabular}
    \caption{Change-point test for mean local directions of layered RSA data.}
    \label{tb3}
\end{table}

Therefore, the choice of the mean of local directions attribute for the change-point analysis in our problem with layered fibre image data is reasonable. Moreover, depending on the data (e.g. containing whirlpools of fibres) it may be better to choose entropy or other attributes to test for other types of anomalies.  

It follows from \cite[Theorem 2.4.]{Penrose} that the Dobrushin estimator of the entropy of i.i.d. vectors on a $C^1-$smooth manifold is asymptotically Gaussian. 
Although the RSA fibre data do not satisfy the i.i.d. assumption of mutual independence of fibre locations and directions, the estimated local directional entropy $\widehat{E}$ for the homogeneous data seems to have a unimodal distribution, see Figure~\ref{fig:Picture11r}. 
Assuming that the Gaussian distribution provides a reasonable approximation also in this case, we apply the $3\sigma$-rule with $\sigma^2$ being the sample variance of $\widehat{E}$, compare \cite{ruiz2016entropy}, to find anomaly regions in both Figures~\ref{fig:Picture5l} and \ref{fig:Picture5r}. 

\begin{figure}[h]
	\center{\includegraphics[width=0.6\linewidth,keepaspectratio]{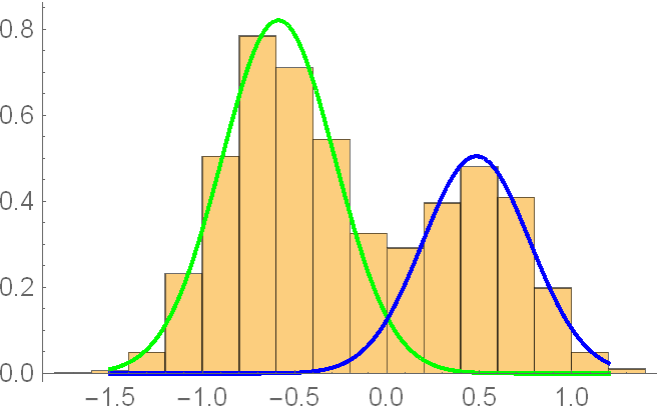}}
	\caption{Histogram of frequencies of the local entropy of fibre directions and two separated Gaussian probability density functions found by the spatial SAEM algorithm. Layered RSA fibre data, $\sigma=0.5986$.}
	\label{fig:Picture11l}
\end{figure}
\begin{figure}[h]	
\center{\includegraphics[width=0.6\linewidth,keepaspectratio]{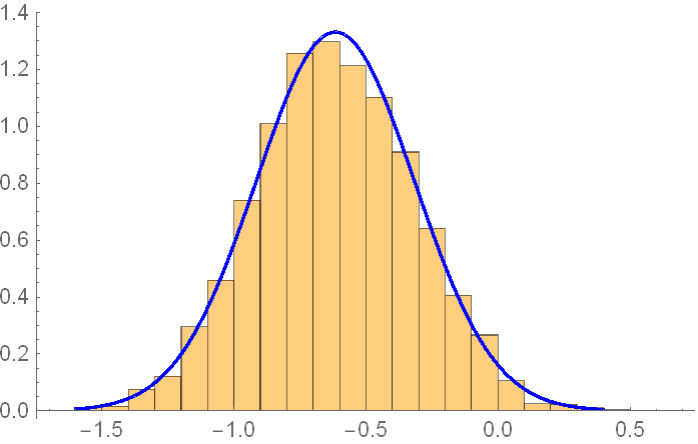}}
	\caption{Histogram of frequencies of the local entropy of fibre directions fitted Gaussian probability density. Homogeneous RSA fibre data, $\sigma=0.2997$}
	\label{fig:Picture11r}
\end{figure}

One can see that all centered entropy values lie in the interval $[-3\sigma; 3\sigma],$ so the $3\sigma-$method does not distinguish between homogeneous (Figure~\ref{fig:Picture5r}) and inhomogeneous (Figure~\ref{fig:Picture5l}) images.
We conclude that the $3\sigma$-rule for anomaly detection does not work well if the anomaly regions are large enough to produce histograms of the clustering attribute with many modes.

The fact that the distribution of $\widehat{E}$ seems to have two modes might indicate that it is a mixture of two Gaussian distributions. So we apply the Spatial SAEM algorithm from Section~\ref{SSEM} to separate these modes. 
By an empirical study, a scanning window $W$ consisting of $5\times5\times5$ small windows $\widetilde{W}_{\vec{k}}$ was selected, i.e. $M=5.$  
Additionally, we put $r=\Delta M$ in \eqref{defa} by default.

First, let us consider clustering based on the attribute entropy. In the layered data, the Spatial SAEM algorithm finds two clusters and determines the distributional parameters for them. The clustering results are presented in Figurer~\ref{fig:Picture13l} and~\ref{fig:Picture13r} . Green labels denote the centers of scanning windows corresponding to the homogeneous fibre material and blue labels mark objects belonging to the anomalous region. As expected, the Spatial SAEM algorithm found no anomaly for homogeneous RSA fibre data. 
\begin{figure}[h]
	\center{\includegraphics[width=0.7\linewidth,keepaspectratio]{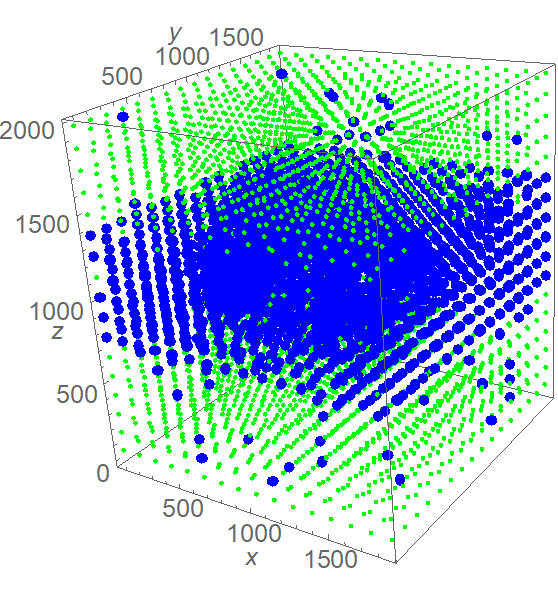} }
	\caption{Anomaly detection in layered RSA fibre data using the local entropy: SAEM algorithm.}
	\label{fig:Picture13l}
\end{figure}
\begin{figure}[h]
		\center{\includegraphics[width=0.7\linewidth,keepaspectratio]{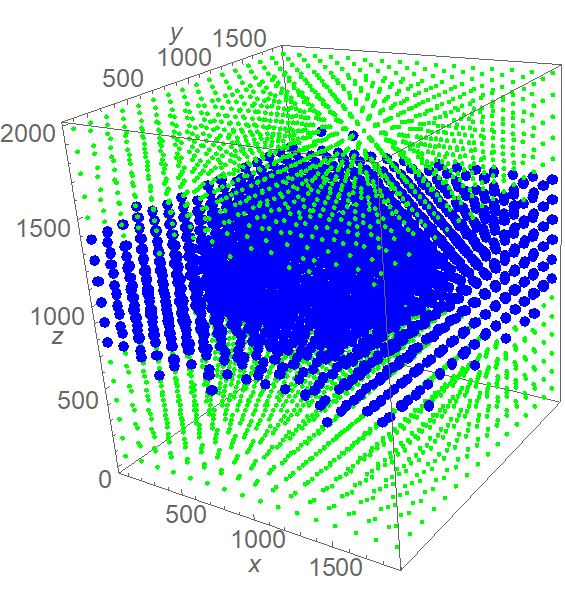}}
	\caption{Anomaly detection in layered RSA fibre data using the local entropy: Spatial SAEM algorithm.}
	\label{fig:Picture13r}
\end{figure}

We also ran the Spatial SAEM algorithm with the attributes mean of local direction (MLD) and a vector combining entropy and MLD. The results for the layered data are presented in Figures~\ref{fig:Picture14l} and~\ref{fig:Picture14r} . One can see that combination of both attributes gives a more reliable result.

\begin{figure}[h]
		\center{\includegraphics[width=0.7\linewidth,keepaspectratio]{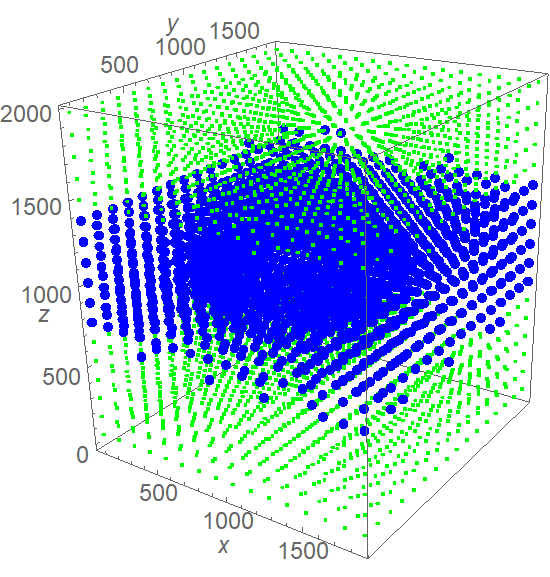}}
		\caption{Anomaly detection with Spatial SAEM algorithm in the layered RSA fibre data using mean of local directions.}
			\label{fig:Picture14l}
\end{figure}
\begin{figure}[h]
	\center{\includegraphics[width=0.7\linewidth,keepaspectratio]{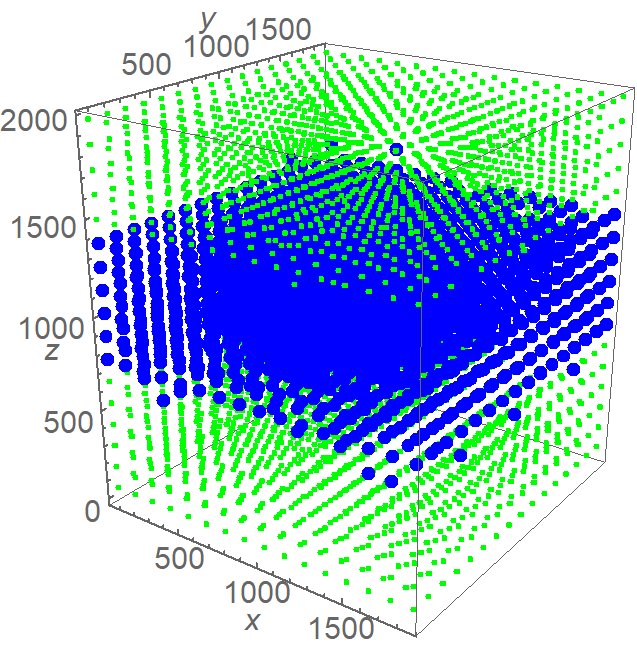}}
	\caption{Anomaly detection with Spatial SAEM algorithm in the layered RSA fibre data using local entropy and mean of local directions attributes.}
	\label{fig:Picture14r}
\end{figure}

\begin{remark}
The problem of clustering a fibre material into homogeneity and anomaly zones using vector-valued cluster attributes can be solved by a variety of other clustering methods, see the books \cite{everitetal11,ClustHandbook16,WKClus18} for an overview. In addition to the results reported here, we tried the recent AWC algorithm \cite{AWC}. However, the spatial SAEM approach yields better results, cf. \cite{ConfMat}. Moreover, it does not require a complex parameter tuning and operates fast.

We also tried to use a principal axis of fibre directions as a classification attribute in the described SAEM algorithm. But the results are worse than the ones for the MLD attribute. This effect can be explained by the fact that the distribution of principal axes has its support on a unit sphere and thus cannot be a mixture of Gaussian distributions in $\R^3.$ Therefore, the estimates in the M-step \eqref{EMMstep1}-\eqref{EMMstep3}, \eqref{SEMMstep1},\eqref{SEMMstep2} have to be modified, cf. \cite{Franke}. This task goes, however, beyond the scope of the present article.
\end{remark}

\subsection{Real glass fibre reinforced polymer}

Now we apply our anomaly detection approach to a 3D-image of a glass fibre reinforced polymer. The images are provided by the Institute for Composite Materials (IVW) in Kaiserslautern, see Figure~\ref{fig:Picture22l}. For a detailed description of the material we refer to \cite{IVWFibres}.

\begin{figure}[h]
		\center{\includegraphics[width=0.7\linewidth,keepaspectratio]{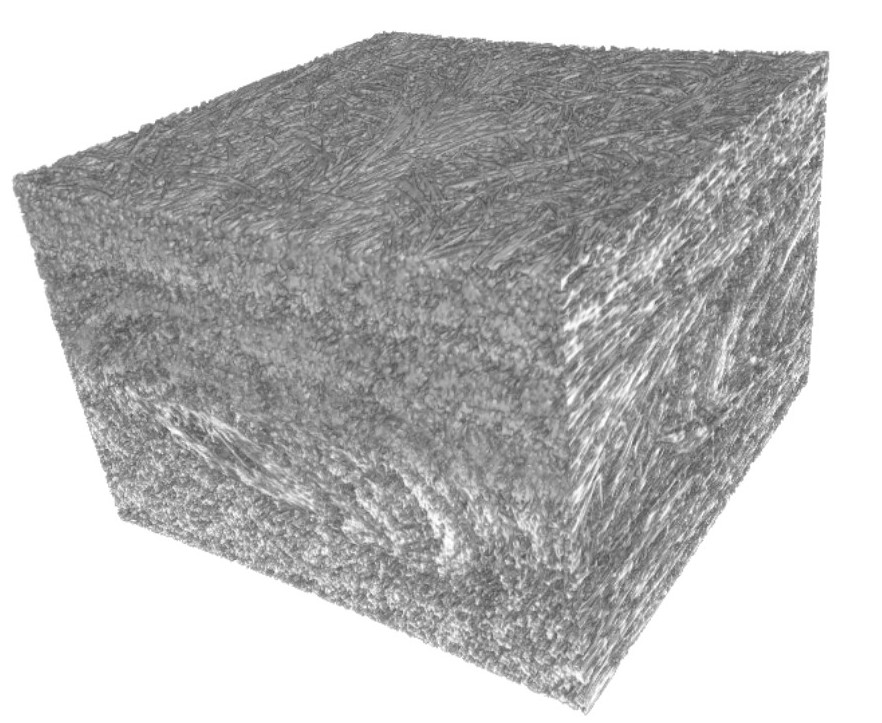}}
		\caption{3D image of a glass fibre reinforced composite material. 970$\times$1469$\times$1217 voxels, voxel spacing: 4 $\mu$m.}
		\label{fig:Picture22l}
\end{figure}
\begin{figure}[h]
	\center{\includegraphics[width=0.7\linewidth,keepaspectratio]{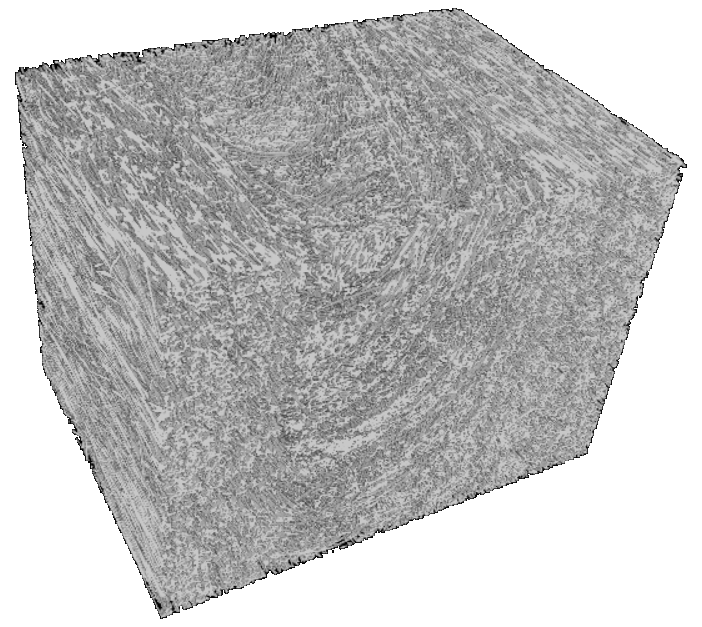}}
	\caption{3D image of a glass fibre reinforced composite material. The part of the data containing an anomaly (970 $\times$ 700 $\times$ 660 voxels) }
	\label{fig:Picture22r}
\end{figure}

We apply the change point analysis from Section 4 to  real data with $970\times1469\times1217$ voxels and the estimated radius of 3 voxels. We obtain $64\times97\times 80$ small windows $\widetilde{W}_{\vec{i}}$ with $15\times15\times15$ voxels.

{For mean local directions, the parametric set $\Theta_0$ is constructed with $\Delta_0=\Delta_1=8,$ $\gamma_0=0.05,\gamma_1=0.5$ and $L_M= 22$ in \eqref{parset}, which gives $|\Theta_0|=33004.$} To choose the suitable value of $m$ for $m-$dependence we need additional investigation. Under the hypotheses $H_0^x,H_0^y,H_0^z$ the random fields $\tilde{x}$,$\tilde{y}$, and $\tilde{z}$ are assumed to be stationary, so that we can estimate their covariance functions. We use the standard approach and estimate e.g. $\rho_x(\mathbf{h})=\mathbf{Cov}( \tilde{x}_{\mathbf{1}},\tilde{x}_{\mathbf{1}+\mathbf{h}})$ as
\begin{equation*}
\hat{\rho}(\mathbf{h})=\frac{1}{|K|-1}\sum_{\mathbf{k}\in K}  \left(\tilde{x}_{\mathbf{k}}-\bar{x}_0\right)\left(\tilde{x}_{\mathbf{k}+\mathbf{h}}-\bar{x}_\mathbf{h}\right),
\end{equation*}
where $K=\{ (k_1,k_2,k_3)\in \N^3, 1\leq k_1\leq M_1-h_1,1\leq k_2\leq M_2-h_2,1\leq k_3\leq M_3-h_3\},$ and $\bar{x}_0$ and $\bar{x}_\mathbf{h}$ are the sample means of $\tilde{x}$ over the index ranges $K$ and $K+\mathbf{h}$ respectively. To visualize $\hat{\rho}_x$ we compute its maximum values in the following way:
\begin{align*}\hat{\rho}_{x,max}(i)=\max_{1\leq k_1,k_2,k_3\leq i}(&\hat{\rho}_x(i,k_2,k_3),\hat{\rho}_x(k_1,i,k_3),\\
&\hat{\rho}_x(k_1,k_2,i)),i\geq 1.
\end{align*}
The values of $\hat{\rho}_{x,max}$(black color), $\hat{\rho}_{y,max}$(red color), $\hat{\rho}_{z,max}$(blue color) are given in Figure \ref{Fig:cov}. 
{We choose the value of $m$ in such a way that $\hat{\rho}_{x,max}(i)\leq \varepsilon_0,$ $\hat{\rho}_{y,max}(i)\leq \varepsilon_0,$ $\hat{\rho}_{z,max}(i)\leq \varepsilon_0,$ for all $i\geq m,$ where $\varepsilon_0$ is a threshold. Here we use   $\varepsilon_0=0.04$ and obtain $m=9.$ We have $M_0=0.5$ and assume that $\sigma^2=0.2.$}

Moreover, due to simulation experiments in Section 4, we compute critical values for the change-point statistics $T_W(\cdot)$ and $p-$values from inequality \eqref{critalv} with $m=7.$ 
\begin{figure}
    \centering
    \includegraphics[width=0.55\textwidth]{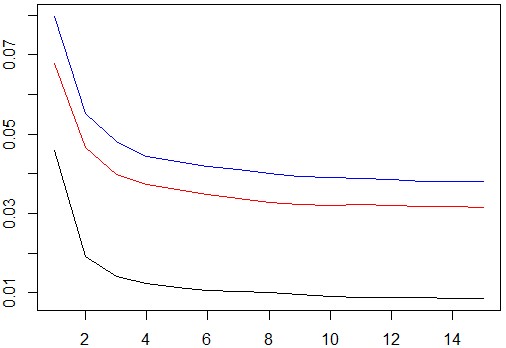}
    \caption{Values of empirical covariance $\hat{\rho}_{x,max},\hat{\rho}_{y,max},\hat{\rho}_{z,max}.$}
    \label{Fig:cov}
\end{figure}

For the random field of estimated local entropies we have $m=1$ and the parametric set $\Theta_0$ is constructed with $\Delta_0=\Delta_1=2,$ $\gamma_0=0.05,\gamma_1=0.5$ and $L_M= 4$ in \eqref{parset}, which gives $|\Theta_0|=12366.$ We assume that $\sigma^2=0.5$ and $M_0=\sigma.$ 
The result of our change point analysis is presented in Table \ref{tb4}. 
\begin{table}[h]
    \centering
    \begin{tabular}{c r r r }
        \hline\noalign{\smallskip}
        Attribute &  Sample var. & Test statistics & $p-$value \\
        \noalign{\smallskip}\hline\noalign{\smallskip}
        $\tilde{x}$&0.04589 & 0.15995 & 1.00\\
        $\tilde{y}$&0.06795 & 0.44733 &$2.1\times10^{-10}$\\
        $\tilde{z}$& 0.07982 &0.43383 &$1.3\times10^{-6}$\\
        $E$& 0.30126 &0.46811 &$3.96\times10^{-8}$\\
        \noalign{\smallskip}\hline
    \end{tabular}
    \caption{Change-point test for mean local directions of real data.}
    \label{tb4}
\end{table}
Our change point test detects the evidence of anomaly regions in real fibre data at significance level {$\alpha=8.4\times 10^{-10}.$}

 Similarly to the case of RSA data, the detection of anomalies by the $3\sigma-$rule  gives meaningless results. The Spatial SAEM algorithm works much better. Its results are presented in Figures \ref{fig:Picture25a}, \ref{fig:Picture25b}, and \ref{fig:Picture25c}, where the color labelling of points is the same as for the simulated data.
\begin{figure}[h]
			\center{\includegraphics[width=0.8\linewidth,keepaspectratio]{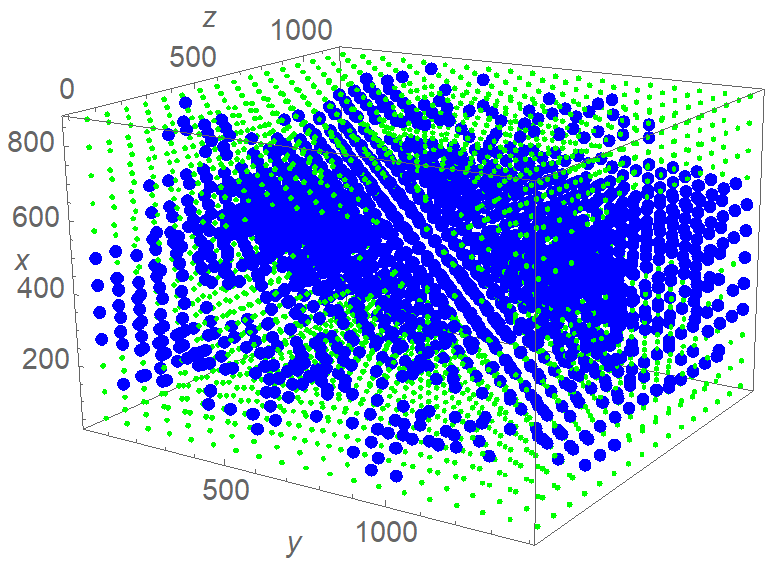} }
		\caption{Anomaly detection in the fibre image (Fig. \ref{fig:Picture22l}) using local entropy.}
		\label{fig:Picture25a}
\end{figure}
\begin{figure}[h]
	\center{\includegraphics[width=0.8\linewidth,keepaspectratio]{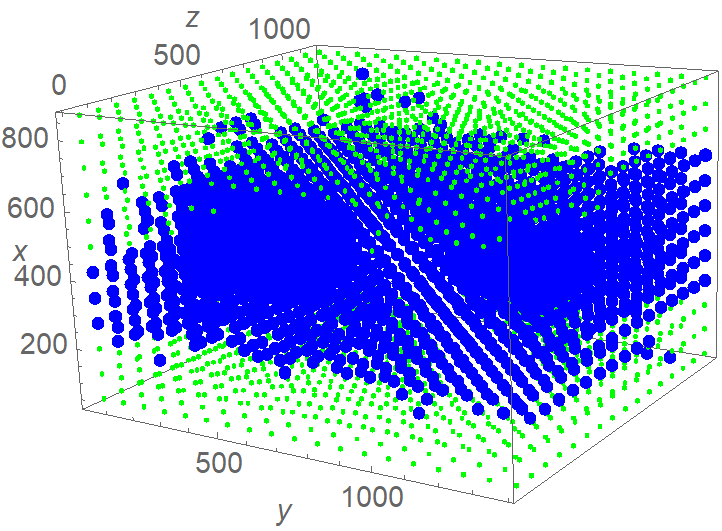}}
		\caption{Anomaly detection in the fibre image (Fig. \ref{fig:Picture22l}) using mean of local direction.}
		\label{fig:Picture25b}
\end{figure}	
\begin{figure}[h]
		\center{\includegraphics[width=0.8\linewidth,keepaspectratio]{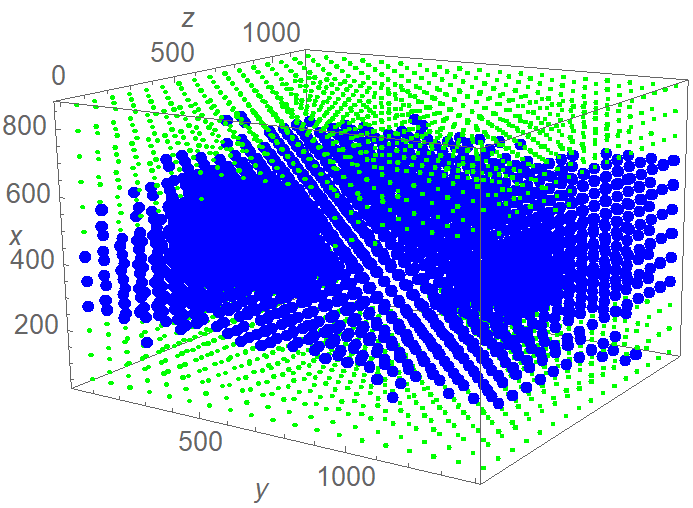}}
		\caption{Anomaly detection in the fibre image (Fig. \ref{fig:Picture22l}) using local entropy and mean of local direction.}
	\label{fig:Picture25c}
\end{figure}    

We conclude that the Spatial SAEM algorithm with  attributes entropy, MLD, and a combination of both produces adequate results.

Since the images in Figure \ref{fig:Picture25b} and \ref{fig:Picture25c}  look very similar at first glance, we investigate the results of the Spatial SAEM anomaly detection in more detail. We separate a part of the 3D image (Figure \ref{fig:Picture22r}) into 9 layers and present the result of clustering for the 1st and 5th layer for the entropy in Figure~\ref{fig:Picture27}, for MLD in Figure~\ref{fig:Picture28}, and for the combination of both in Figure~\ref{fig:Picture29}.

\begin{figure*}[h]
\begin{minipage}[h]{0.48\linewidth}
		\includegraphics[width=\textwidth,keepaspectratio]{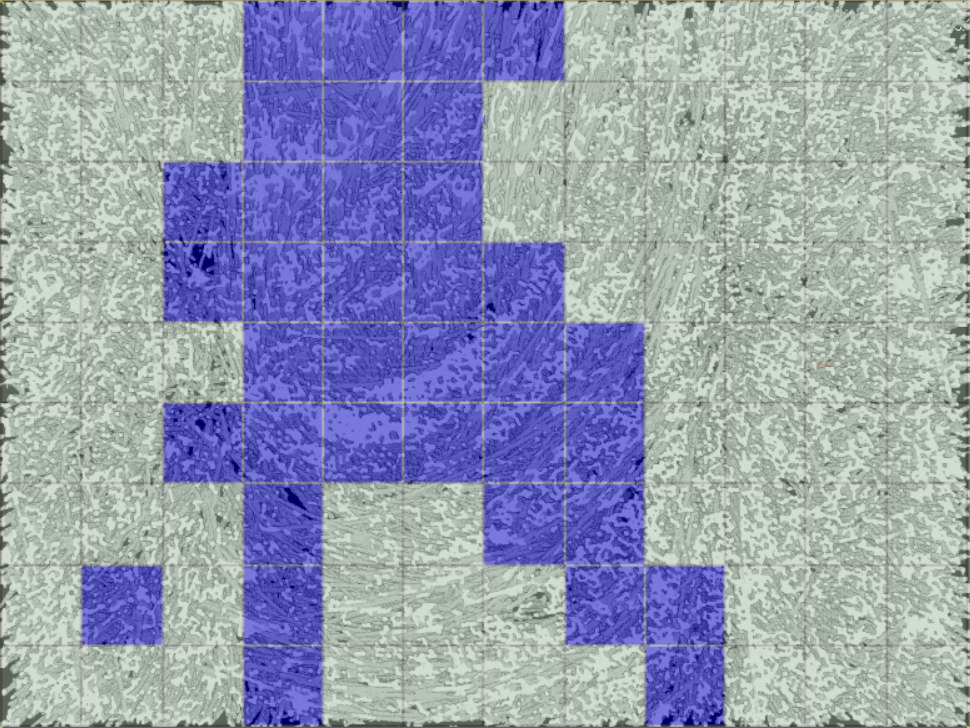} \center{1st layer}
	\end{minipage}
	\hfill
	\begin{minipage}[h]{0.48\linewidth}
		\includegraphics[width=\textwidth,keepaspectratio]{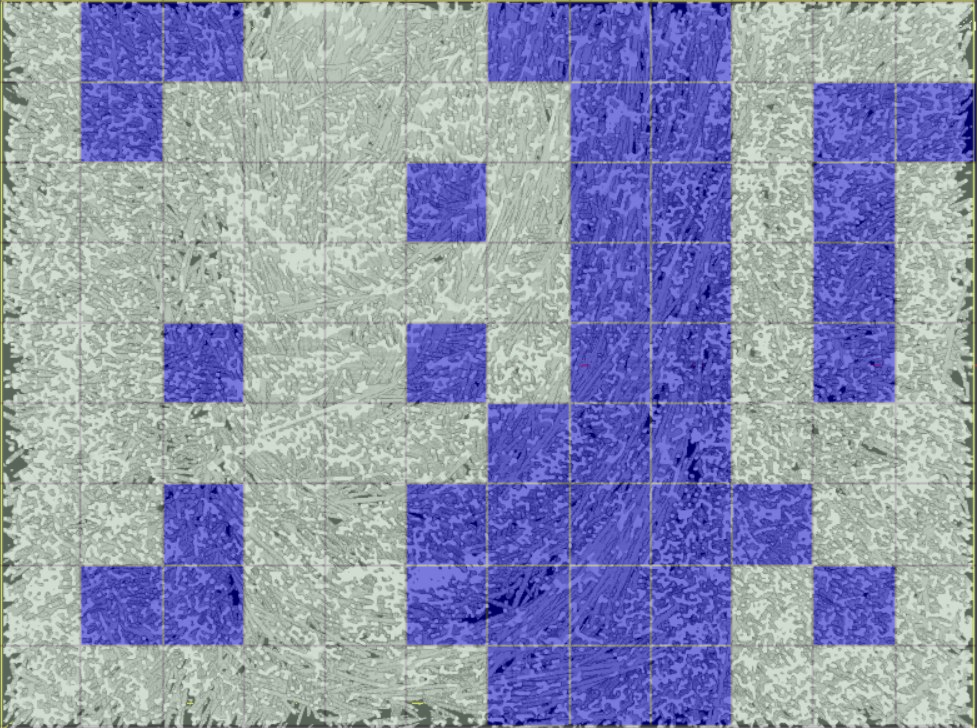} \center{5th layer}
	\end{minipage}
	\caption{Spatial SAEM clustering according to entropy}
	\label{fig:Picture27}
\end{figure*}
\begin{figure*}[h]
\begin{minipage}[h]{0.48\linewidth}
		\includegraphics[width=\textwidth,keepaspectratio]{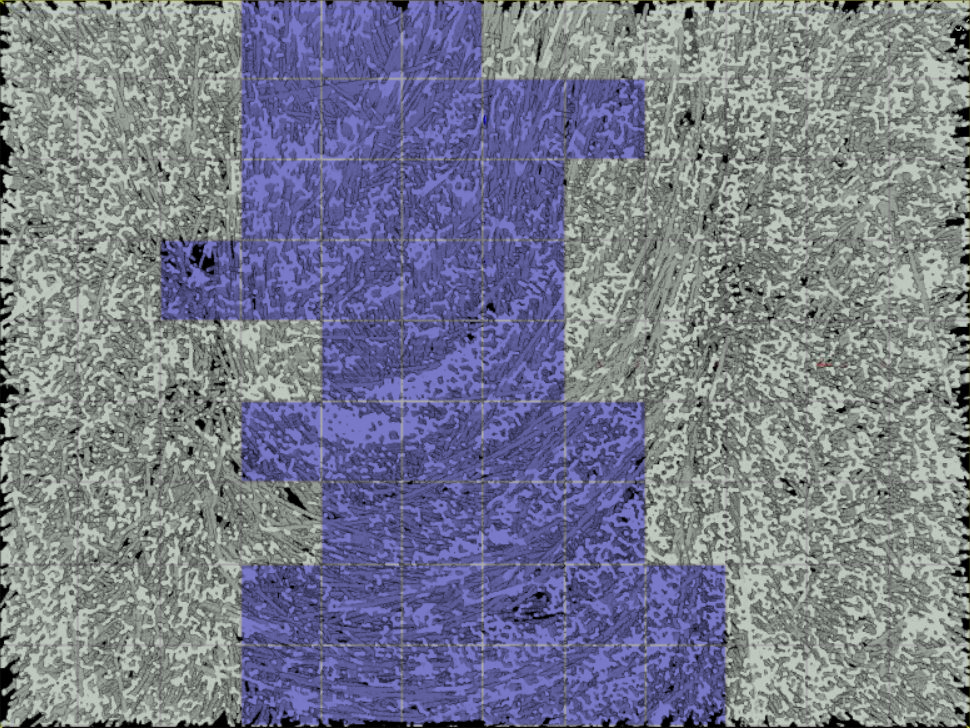} \center{1st layer}
	\end{minipage}
	\hfill
	\begin{minipage}[h]{0.48\linewidth}
		\includegraphics[width=\textwidth,keepaspectratio]{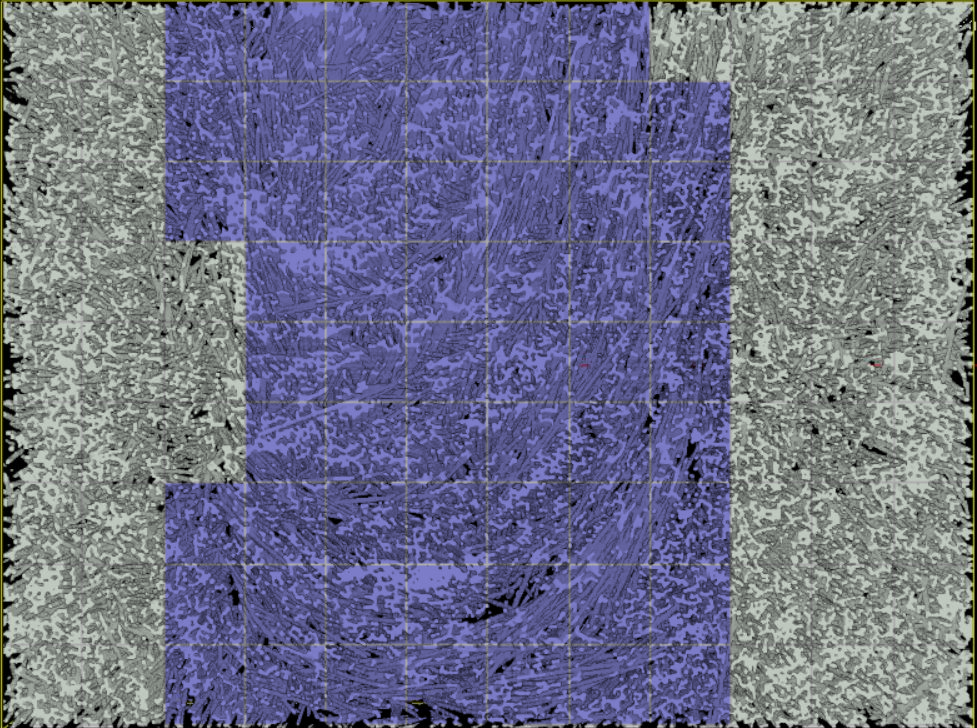} \center{5th layer}
	\end{minipage}
	\caption{Spatial SAEM clustering according to mean of local directions}
	\label{fig:Picture28}
\end{figure*}
\begin{figure*}[h]
\begin{minipage}[h]{0.48\linewidth}
		\includegraphics[width=\textwidth,keepaspectratio]{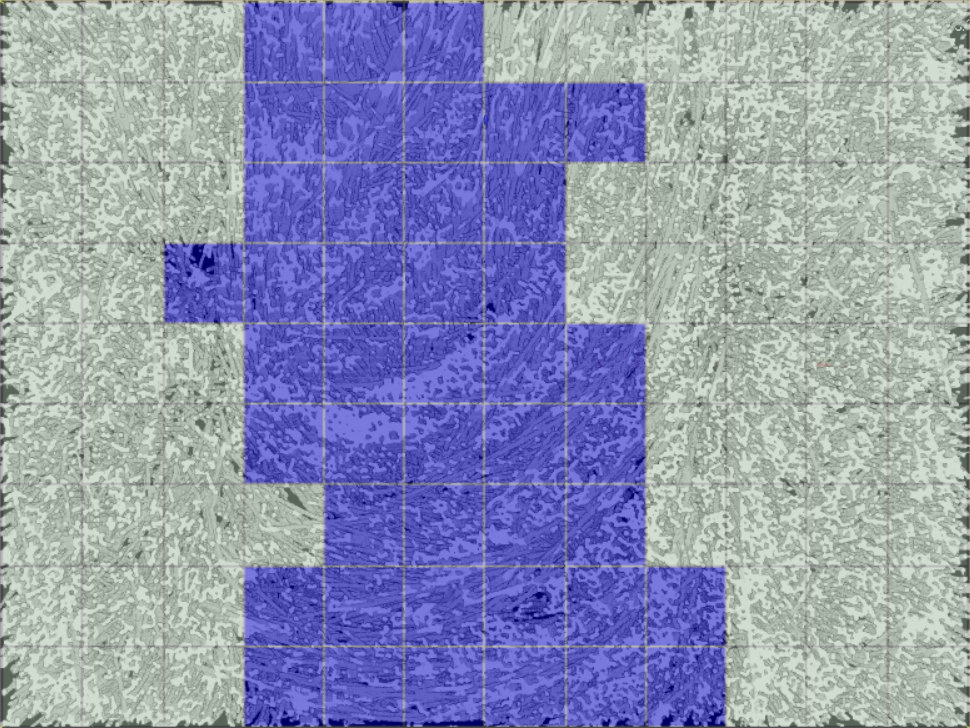} \center{1st layer}
	\end{minipage}
	\hfill
	\begin{minipage}[h]{0.48\linewidth}
		\includegraphics[width=\textwidth,keepaspectratio]{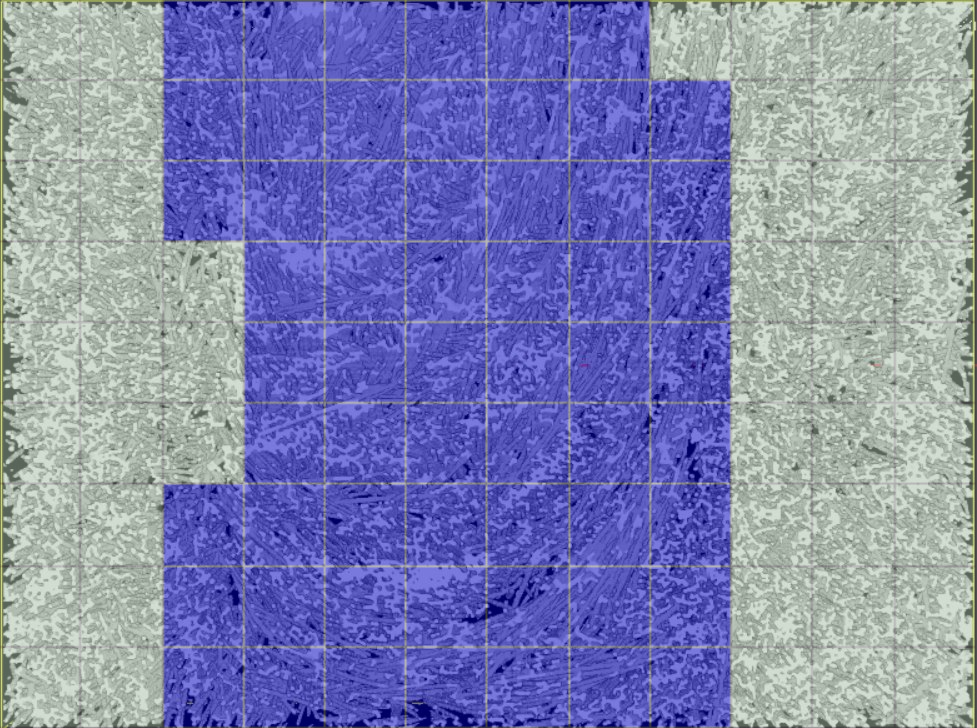} \center{5th layer}
	\end{minipage}
	\caption{Spatial SAEM clustering according to a combination of entropy and MLD}
	\label{fig:Picture29}
\end{figure*}
One can observe that the Spatial SAEM algorithm with local entropy attribute detects vortices of fibers in the material as anomaly regions. This is natural since a vortex exhibits a large diversity of fibre directions, and the entropy is a measure of such diversity. The Spatial SAEM anomaly detection using the mean of local fibre directions identifies the central part of the image (cf. Figure \ref{fig:Picture28}) as an anomaly region, where the directions of fibres differ from the average throughout the image. Finally, the Spatial SAEM approach using both clustering attributes identifies both vortices of fibres and layers of fibres with principally different main direction, cf. Figure \ref{fig:Picture29}.

\begin{acknowledgements}
We are grateful to Dr. Stefanie Schwaar from Fraunhofer ITWM for valuable discussions and to Jan Niedermeyer for simulating RSA data.
\end{acknowledgements}

\bibliographystyle{abbrv}
\bibliography{Literatur,bib_changepoint,bib_changepoint_RF}


%
%

\end{document}